\newcommand\SNR{\mathrm{SNR}}
\theoremstyle{plain}
\newtheorem{lem}{Lemma}
\newtheorem{prop}{Proposition}
\newtheorem{thm}{Theorem}
\newtheorem{cor}{Corollary}
\begin{document}
	
	\title{A Markov Chain Approach for Myopic Multi-hop Relaying: Outage and Diversity Analysis}
	
	\author{Andreas Nicolaides, \textit{Student Member, IEEE,}
		Constantinos Psomas, \textit{Senior Member, IEEE,}
		and~Ioannis Krikidis, \textit{Fellow, IEEE}
		\thanks{This work has received funding from the European Research Council (ERC) under the European Union’s Horizon 2020 research and innovation programme (Grant agreement No. 819819) and from the Marie Skłodowska-Curie project PAINLESS under the European Union’s Horizon 2020 research and innovation programme (Grant agreement No. 812991).}
		\thanks{A. Nicolaides, C. Psomas and I. Krikidis are with the IRIDA Research Centre for Communication Technologies, Department of Electrical and Computer Engineering, University of Cyprus, Cyprus, e-mail:\{anicol09, psomas, krikidis\}@ucy.ac.cy. Part of this work was presented at the IEEE Global Communications Conference, Taipei, Taiwan, December 2020 \cite{nicolaides2020}.}}
	
	\maketitle

	\begin{abstract}
		In this paper, a cooperative protocol is investigated for a multi-hop network consisting of relays with buffers of finite size, which may operate in different communication modes. The protocol is based on the myopic decode-and-forward strategy, where each node of the network cooperates with a limited number of neighboring nodes for the transmission of the signals. Each relay stores in its buffer the messages that were successfully decoded, in order to forward them through the appropriate channel links, based on its supported communication modes. A complete theoretical framework is investigated that models the evolution of the buffers and the transitions at the operations of each relay as a state Markov chain (MC). We analyze the performance of the proposed protocol in terms of outage probability and derive an expression for the achieved diversity-multiplexing tradeoff, by using the state transition matrix and the related steady state of the MC. Our results show that the proposed protocol outperforms the conventional multi-hop relaying scheme and the system's outage probability as well as the achieved diversity order depend on the degree of cooperation among neighboring nodes and the communication model that is considered for every relay of the network.
	\end{abstract}
	
	\begin{IEEEkeywords}
		Cooperative networks, multi-hop relaying, myopic strategy, outage probability, diversity-multiplexing tradeoff, Markov chain.
	\end{IEEEkeywords}
	
	\IEEEpeerreviewmaketitle
	
	\section{Introduction}
		
		The initial deployment and standardization of 5G networks has recently emerged, while research in academia and industry has already made some first steps towards the beyond 5G (B5G)/6G-era \cite{saad2020}. Nowadays, there is a continuous need for an increasing number of applications and devices to be connected, leading to an ever-growing network of nodes that have to communicate. Cooperative networks with multiple relays that can assist the transmission of information from a source to a destination is an appealing technology that enables device-to-device (D2D) communications, due to its implementation simplicity and easy scalability \cite{tehrani2014}. Especially in B5G/6G networks, research needs to conceive innovative ideas to satisfy the challenging demands of ultra-reliable low-latency communications for massive connectivity networks \cite{chen2021}. Cooperative relay communications is a promising low-cost solution with high flexibility, which can help the connection of trillions of devices with enhanced reliability and low energy consumption. As such, relay communications can be considered for the implementation of local and private 5G networks, as they provide an adaptive physical layer and flexible transmission protocol. Due to their potential advantages towards future generations of wireless communications, cooperative relay networks have attracted considerable research interest and have been considered for several practical applications, such as in wireless ad-hoc networks \cite{atallah2016}, millimeter-wave (mmWave) communications \cite{rois2015}, underwater and air-to-ground networks \cite{saeed2019,chen2018}, and secrecy communications \cite{wang2020}.
		
		Numerous studies have investigated the performance of cooperative networks with multiple relays over a single transmission path (multi-hop relaying). More specifically, the end-to-end performance of a dual-hop network was analyzed in \cite{hasna2004}, while \cite{bjornson2013} extended the classical dual-hop relaying to a generalized model with hardware impairments. Other studies extended this approach to multi-hop schemes. In \cite{hasna2003}, the outage probability performance of a multi-hop system over Nakagami fading channels was studied. In addition, the authors in \cite{jamali2015} presented a new protocol for half-duplex multi-hop relaying networks based on the concept of buffer-aided relaying and investigated the corresponding achievable rates. Multi-hop relaying has been also proposed to assist connectivity for mmWave communications \cite{rois2015,lin2015}, a technology that is considered as one of the main components of 5G networks, but is highly susceptible to path blockage. In particular, \cite{rois2015} investigated a joint scheduling and congestion control policy in multi-hop mmWave networks, in order to maximize the throughput under fairness requirements. On the other hand, the connectivity of mmWave networks with multi-hop relaying was analyzed in \cite{lin2015} by considering a stochastic geometry approach, where the obstacles are modeled as a Boolean model.  
		
		Cooperative diversity is another relaying technique that has received a lot of attention in the literature, as it enables broadcast transmission and spatial diversity of the participating nodes. In the seminal work in \cite{laneman2004}, Laneman \textit{et al.} proposed several techniques of cooperative communication, such as selection relaying and incremental relaying, and investigated their outage performance. Moreover, the authors in \cite{ribeiro2005} studied general cooperating setups, consisting of multiple transmission paths that include an arbitrary number of cooperating hops, and derived asymptotic expressions for the average symbol error probability. These setups consisted of either a single relay or multiple relays in parallel transmission paths. 
		
		A cooperation scenario for multi-hop networks was introduced in \cite{boyer2004}, where it has been shown that the spatial diversity gain could be achieved by combining at each node the signals that have been concurrently sent by all the preceding terminals along a single transmission path. Based on this idea, in \cite{sadek2007}, a class of cooperative diversity protocols was proposed, where each relay combines the signals received from an arbitrary number of previous nodes. The authors proved that this class of cooperative protocols can achieve the same diversity gain as \cite{boyer2004}. Significant diversity gain can be also achieved, according to \cite{dong2012}, if we consider a multi-hop buffer-aided system, where every relay has a buffer of sufficient size and at each time-slot a stored packet is transmitted over the best hop, based on the received signal-to-noise ratio ($\SNR$). Furthermore, in \cite{sreeram2012}, the maximum diversity and multiplexing gain, as well as the achieved diversity-multiplexing tradeoff (DMT) of various multi-hop cooperative network topologies are characterized. However, full cooperation of the nodes in such networks exhibits a number of practical difficulties in its implementation, especially in terms of multi-node coordination and power management. To overcome these issues, the authors in \cite{ong2008} proposed the \textit{myopic} decode-and-forward (DF) strategy as an information theory concept. In this strategy, each node of the network cooperates with a limited number of subsequent neighboring nodes. They showed that the achievable rate increases considerably, while the complexity of its implementation remains low.
		
		In several cases, the devices that need to be connected in local and private 5G networks may support different transmission policies affecting their communication capabilities e.g., for energy conservation. Such devices could be utilized to provide further optimized services in order to improve the networks' performance, based on the available resources. The authors in \cite{niyato2007} presented a queuing model for the performance analysis of several sleep and wakeup strategies in a network with solar-powered wireless sensors. In \cite{medepally2010}, a relay selection scheme was considered for a cooperative network with energy harvesting (EH) relays that can assist the communication from source to destination only if they have sufficient energy. It was observed that, the overall performance of such networks depends on the EH parameters of the relays. A similar scenario was considered in \cite{luo2013}, where a new relay selection method was proposed, based on the throughput gain of the EH relays with enough stored energy, which improved the overall performance of the cooperative network. Moreover, in \cite{morsi2018}, the on-off transmission policy for a buffer aided EH node was studied in terms of outage probability and average throughput, where the EH node transmits information only if the stored energy exceeds a certain value, otherwise it remains silent.
		
		It is, therefore, an important and challenging problem to understand how the utilization of relaying nodes with different communication capabilities affects the performance of a multi-hop network with limited cooperation. Motivated by this, in this paper, we propose a general cooperative protocol over a multi-hop network, where the relays have buffers of finite size and may operate in different communication modes. The protocol is inspired by the myopic DF strategy \cite{ong2008}, and can be applied to networks with an arbitrary number of relays with different modes of operation. For example, such scenario could be considered for wireless networks, where some intermediate nodes are crucial for the sustainability of the communication and so they are connected to the power grid and are always able to transmit information. The remaining nodes could be self-powered through EH and could transmit data only if the harvested energy is above a required level \cite{medepally2010,luo2013}. The main contributions of the paper are summarized as follows:
		\begin{itemize}
			\item A novel myopic-based cooperative protocol over a multi-hop network is proposed. Through this paper, we extend the work presented in \cite{ong2008} by studying the performance of myopic strategy in terms of outage probability and diversity gain. To our knowledge, no previous work in the literature analyzes myopic strategy from a communication theory perspective. A system model is presented, where the flow of information is assisted by using finite buffers at each relay of the network. Finally, the myopic-based protocol is extended to the case where relays may operate in different communication modes. Therefore, a fundamental approach of how the flow of information from source to destination can be conveyed is presented, based on the status of the buffers and the communication capabilities of each relay.
			\item For the analysis of the system in terms of outage probability, we model the evolution of the considered network as a state Markov chain (MC), by taking into account the transitions that take place at the buffers and the communication modes of the relays. This approach provides a flexible and elegant modeling of the different instances that the network can be found. By using the state transition matrix and the related steady state of the MC, we investigate a complete theoretical framework for the performance analysis of such cooperative networks. The presented framework is general and can be adapted to an arbitrary number of relays, any myopic strategy and several communication strategies supported by the participating relays.
			\item Our results demonstrate that as the number of cooperating nodes increases, the performance of the system is enhanced both in terms of outage probability and diversity gain. Furthermore, it is shown that the diversity order that can be achieved by the proposed protocol depends significantly on the communication strategy that is supported by every relay of the network. Finally, by extending the proposed protocol for multi-branch networks, the outage probability of the system is improved, while the overall diversity gain depends on the diversity gain that each branch can separately achieve. 
		\end{itemize}
		\noindent As such, the proposed myopic protocol can provide useful insights for the design and realization of local and private 5G networks with reduced computational complexity and memory requirements and increased energy efficiency.
		
		The remainder of the paper is organized as follows. Section \ref{sys_model} introduces the system model and Section \ref{strategy} describes the implementation of the proposed protocol. A state Markovian model approach used for the derivation of the system's outage probability is presented in Section \ref{markov}. In Section \ref{analysis}, we provide the numerical expressions of the outage probability analysis and the DMT for the proposed protocol. Our numerical and simulation results are presented in Section \ref{results} and finally, some concluding remarks are stated in Section \ref{conc}.
	
		\textit{Notation:} Lower and upper case boldface letters denote vectors and matrices, respectively; $\mathbb{P}[X]$ denotes the probability of the event $X$ and $\mathbb{E}[X]$ represents the expected value of $X$; $\mathds{1}_{X}$ is the indicator function, where $\mathds{1}_{X}=1$ if $X$ is true, otherwise $\mathds{1}_{X}=0$; $\Im(x)$ returns the imaginary part of $x$ and $\jmath=\sqrt{-1}$ denotes the imaginary unit; $\Phi(\cdot)$ is the cumulative distribution function (cdf) of the standard normal distribution and $O(\cdot)$ denotes the big $O$ notation; $[x]^{+}=\max(0,x)$, $\lceil x \rceil =\min\{m\in\mathbb{Z}|m\geq x\}$, $\binom{n}{k}=\frac{n!}{k!(n-k)!}$ and $(2n-1)!!=(2n-1)(2n-3)\ldots3\cdot1$.\vspace{-2mm}

	\section{System Model}\label{sys_model}\vspace{-1mm}
	\subsection{Network topology}
	A wireless network topology is considered, which consists of a single source $S$, $N$ intermediate relays $R_{1}, R_{2}, \ldots, R_{N}$, and a single destination $D$\footnote{A network topology with multiple sources and destinations can be also considered, which is left for future work.}. For ease of notation, we let node $i$ correspond to the relay $R_{i}$, $1 \leq i \leq N$, and nodes $0$ and $N+1$ correspond to the source $S$ and destination $D$, respectively. At the relays, the DF scheme is employed for forwarding the signals. Moreover, time is assumed to be slotted and $x(t)$ is used to denote the signal that $S$ sends to $R_{1}$ at time-slot $t$ with normalized energy, i.e. $\mathbb{E}[|x(t)|^2]=1$. Each transmitter (the source $S$ or a relay $R_{i}$) transmits with the same fixed power $P$. The destination $D$ receives data based on a $k$-hop myopic DF strategy \cite{ong2008}, $1 \leq k \leq N+1$, where $k$ represents the maximum number of nodes that a transmitter can forward data to. More specifically, node $i$ ($0 \leq i \leq N$) can send data to $L_{i}=\min (k,N-i+1)$ subsequent nodes. As such, at each time-slot, the $i$-th transmitter splits its power to $L_{i}$ partitions. Therefore, a signal is sent through the link $i \rightarrow j$ with transmit power $a_{i,j}P$, where $a_{i,j}$ denotes the power splitting parameter, such that $\sum_{j=i+1}^{L_{i}+i} a_{i,j}=1$. However, at a given time-slot $t$, only the successfully decoded signals can be forwarded to the appropriate nodes. For that reason, each relay $R_{i}$ has a data buffer (data queue) $b_{i}$ of finite size $L_{i}$, where it can store the decoded signals for forwarding\footnote{Note that each node sends data to the subsequent $L_{i}$ nodes \textit{concurrently}. Therefore, a storage space of the same size is required in order to hold the data that will be forwarded.}, based on the proposed protocol described in Section \ref{strategy}. An example of this topology is presented in Fig.~\ref{fig:model}, for $k=2$ hops and $N=3$ relays.\vspace{-2mm}

	\subsection{Channel model} 
	For the analysis, we consider independent and identically distributed (i.i.d.) channel links that experience propagation path loss, which is assumed to follow the power law $d_{i,j}^{-\eta}$, where $d_{i,j}$ is the distance between the nodes $i$ and $j$ and $\eta$ denotes the path loss exponent. Without loss of generality, we assume the ordering $d_{i,j}<d_{i,j+1}$, $\forall$ $i,j$, $i<j\leq N+1$. Note that this assumption corresponds to distance-based routing protocols in multi-hop networks that take into consideration the Euclidean distance among the nodes e.g. the shortest-path-routing \cite{wang2015}. Furthermore, all wireless links exhibit fading which is modeled as frequency-flat Rayleigh block fading\footnote{Even though we consider Rayleigh fading, the proposed analytical framework is general and the extension to other fading models is straightforward as we only need to consider their probability distributions.}. This signifies that the fading coefficients $h_{i,j}$ remain constant during one time-slot, but change independently for different time-slots, by following a circularly symmetric complex Gaussian distribution with zero mean and unit variance i.e., $h_{i,j}\sim\mathcal{CN}(0,1)$. We assume that during one time-slot, the relays of the system can receive and transmit data simultaneously i.e., they operate in full-duplex mode. Since we focus on the performance of the myopic DF scheme, we consider an ideal scenario where we ignore the self-interference and interference from other relays; interference mitigation can be achieved through sophisticated signal processing and equalization techniques \cite{boyer2004,korpi2017}. In other words, the presented analysis serves as a communication theory bound.\vspace{-2mm}
	
	\begin{figure}[t!]\centering
		\includegraphics[width=1\linewidth]{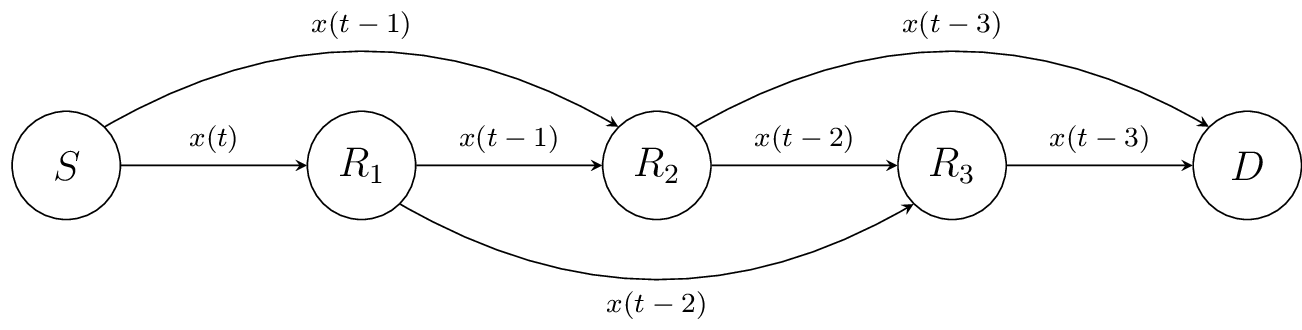}\vspace{-3mm}
		\caption{Topology for $2$-hop myopic DF strategy in a wireless network with three relays. The arrows indicate all the available links in the system and for each link the corresponding signal that will be transmitted is shown.}
		\label{fig:model}\vspace{-3mm}
	\end{figure}
	
	\subsection{Relay communication model}
	The relays of the network operate based on two different communication modes, namely the \textit{active} and \textit{silent} modes. If a relay is active during one time-slot, then it is able to simultaneously transmit and receive the decoded signals. In the silent mode, the relay can still receive data but is unable to forward its decoded signals to the following nodes \cite{niyato2007}.
	We consider a scenario where a subset of the participating relays are always active. The remaining relays are able to switch their operations between the two communication modes, and therefore can be either active or silent. Such relays could correspond to devices that are always connected to the power grid and switch off their transmitters occasionally by design, in order to conserve energy. Alternatively, the switching between the two communication modes could occur probabilistically, based on the EH profile of the devices \cite{morsi2018}. 
	
	We assume that during a time-slot, a relay's communication mode remains constant, but can change independently between the two modes for different time-slots. In the considered scenario, the relay's decision for the communication mode at each time-slot is modeled as a Bernoulli distribution with parameter $q$, which indicates the probability that a relay operates in silent mode. For the relays that are able to transmit and receive data at every time-slot, or equivalently are always in the active mode, the probability of operating in silent mode is equal to $q=0$. By using the Bernoulli distribution, we provide a unified analysis for various decision-based scenarios at the relays, such as EH scenarios where the decision is related to the status of the energy buffers \cite{li2016}. However, it is important to note that the proposed analytical framework is general and any other distribution can be applied.
	
	\section{A $k$-hop Myopic-based Protocol} \label{strategy}
	We now present our proposed protocol based on the $k$-hop myopic DF strategy \cite{ong2008}. The protocol describes the procedure over which the flow of information from $S$ to $D$ is conveyed within the considered network. Based on the presented system model, at time-slot $t$ the received signal at the $j$-th receiver is given by 
	\begin{equation}
		\label{rec_signal}
		y_{j}(t) = \sum_{i=0}^{N}\mathds{1}_{i\rightarrow j}\sqrt{d_{i,j}^{-\eta}\alpha_{i,j}P}h_{i,j}x(t-j+1)+w_{j}(t),
	\end{equation}
	where $w_j(t) \sim \mathcal{CN}(0,\sigma^2)$ is the additive white Gaussian noise (AWGN) with variance $\sigma^2$ and $\mathds{1}_{i\rightarrow j}$ equals to one if node $i$ transmits a signal to node $j$ at the current time-slot, otherwise it is equal to zero. In this work, we focus on the fundamental analysis of the myopic scheme, so a perfect (global) channel state information (CSI) is assumed and each receiver is able to combine the received signals coherently through co-phasing\footnote{In practice, a channel estimation process is required and the associated overhead increases as $k$ increase; note that $k$ can be adapted to the available resources to achieve a balance between estimation and system performance.}. Therefore, the corresponding instantaneous $\SNR$ at the $j$-th receiver during one time-slot is given by
	\begin{equation}
		\label{rec_SNR}
		\SNR_{j} = \dfrac{P}{\sigma^{2}}\left(\sum_{i=0}^{N}\mathds{1}_{i\rightarrow j}|h_{i,j}|\sqrt{d_{i,j}^{-\eta}a_{i,j}}\right)^2,
	\end{equation}
	where $|h_{i,j}|$ is a random variable that follows a Rayleigh distribution with unit scale parameter. We assume that $\nu$ relays of the network, $0\leq \nu \leq N$, have dual-mode communication capabilities, while the remaining $N-\nu$ relays are always active. Moreover, $S$ is able to transmit all its available data at every time-slot. A signal at the $j$-th receiver is successfully decoded if the instantaneous $\SNR_{j}$ is not less than a predefined threshold $\gamma$ i.e., $\SNR_{j}\geq \gamma$, otherwise an outage occurs. Each relay keeps in its buffer the signals that were successfully decoded, in order to forward them through the appropriate channel links. For this, the buffer of each relay is used as an one-dimensional array with indexed elements, where the element $b_{i}[n]$, $1\leq n \leq L_{i}$, corresponds to the $n$-th most recent signal that the $i$-th relay receives.
	
	At each time-slot, the network performs three specific tasks: $(i)$ transmission of information, $(ii)$ buffer shifting, and $(iii)$ information decoding. More specifically, at an arbitrary time-slot $t$, the network follows the procedure described below:
	\begin{itemize}
		\item \textit{\underline{Transmission of information:}} At first, the node $S$ sends the signals $x(t-j+1)$ to the nodes $j=1, \dots, k$, respectively. Then, every dual-mode relay determines whether it will operate in active or silent mode. Recall that the relays with a single-mode operation are considered to be always active. The $i$-th relay of the network, $i=1,\ldots, N$, forwards the signal $x(t-j+1)$ to the corresponding node $j=i+1, \dots, \min (i+k, N+1)$, if and only if, it is in active mode and its buffer element $b_{i}[j-i]$ is not empty i.e., a signal is stored in this element. Thus, at the specific time-slot, the $j$-th node of the network can receive the signal $x(t-j+1)$ simultaneously from at most $k$ previous nodes. The example in Fig.~\ref{fig:model} shows an instance of the network, where at time-slot $t$ all the relays are active and every buffer's element has a previously decoded signal. As a result, at the transmission phase, all the available links of the system, for the considered 2-hop myopic DF strategy, are used.
		\item \textit{\underline{Shifting operation:}} After the transmission phase, each relay prepares its buffer for the next time-slot. In particular, it shifts the elements one position to the right. In other words, $b_{i}[2]$ will get the data of $b_{i}[1]$, $b_{i}[3]$ the data of $b_{i}[2]$, etc. Therefore, the first element of each buffer becomes unassigned for the decoding of $x(t-j+1)$ (see decoding process below). The shifting operation is required for the proper transmission of the appropriate signals to the corresponding nodes at each time-slot.
		\item \textit{\underline{Decoding process:}} Every relay combines all the received signals $x(t-j+1)$, acquired at the transmission phase, and attempts to decode the message. The first element of its buffer $b_{i}[1]$ is used for the outcome of the decoding process: it stores the signal if it is successfully decoded, otherwise it becomes an empty element. It is important to note that, an empty element indicates that the decoding of the signal failed (i.e. similarly to a pointer indicating a null value) and thus is not able to be forwarded. Finally, the destination $D$ combines the received signals $x(t-N)$ and if the message is not successfully decoded i.e., $\SNR_{N+1}<\gamma$, then the system is in outage and the message is lost.
	\end{itemize}
	According to the above procedure, it is clear that some of the elements at the buffers might be empty due to the shifting operation. Consequently, the proper transmission of the successfully decoded signals to the appropriate nodes relies on the communication mode of each relay and the content of its buffer that evolves with time. Based on these features, in the following section, we introduce a theoretical framework that is exploited for the system's outage probability analysis.

\section{A State Markovian Model Approach} \label{markov}
	For the analysis of the considered system, we provide a theoretical framework that models the evolution of the relays' buffers and the communication mode of the dual-mode relays as a MC. In this section, the state transition matrix construction and the derivation of the stationary distribution of the MC are presented, which will be used for the computation of the system's outage probability in Section \ref{analysis}.\vspace{-2mm}
	
\subsection{Definition of MC states} 
	Firstly, the MC states are defined in order to represent the different instances that characterize the status of the network. The distinction between these instances depends on the buffers' evolution with time and the transitions at the dual-mode communication strategy. A state of the MC, or equivalently a \textit{network state}, needs to capture the transitions that take place at the relays' buffers and the dual-mode relays' operations. Thus, it is necessary to represent the evolution of the buffers and the relays' status by dividing the network states into two separate sub-states, namely the \textit{buffer states} and the \textit{relay states}.
	
	Recall from Section \ref{strategy} that, as a result of the decoding process and the shifting operation, each buffer's elements can be found in two possible conditions: either to have the $n$-th most recently received signal or to be empty. We denote the $m$-th buffer state by
	\begin{equation}
	\label{buf_st}
	u_{m}\triangleq(\beta_{1,m} \: \beta_{2,m} \: \ldots \: \beta_{N,m}),
	\end{equation}
	where $\beta_{i,m}$ is a one-dimensional binary array associated with the $i$-th relay, for which each element $\beta_{i}[n]$ equals $0$ if the corresponding buffer element $b_{i}[n]$ is empty, otherwise is equal to $1$. Therefore, each array $\beta_{i,m}$ indicates the non-empty elements of the corresponding buffer $b_{i}$ at the buffer state $u_{m}$, while from the definition it is apparent that its size equals $L_{i}$. The number of the buffer states is given by all the possible combinations that can be derived by each $\beta_{i}$'s element. Thus, each buffer state is a vector of finite size $L_{\beta}=\sum_{i=1}^{N}L_{i}=(2N-k+1)\frac{k}{2}$, that represents which elements in each buffer have decoded signals, and the total number of buffer states is equal to $M_{B}=2^{L_{\beta}}$.

	However, the buffers' evolution and equivalently the transition between different instances of the network depends also on the communication mode of each dual-mode relay that varies for each time-slot. In the considered scenario, the dual-mode relays are predetermined and are given by the ordered set $H=\{i|R_i \text{ is dual-mode}\}$. The $m$-th relay state is denoted as 
	\begin{equation}
	\label{relay_st}
	v_{m}\triangleq(\lambda_{H(1),m} \: \lambda_{H(2),m} \: \ldots \: \lambda_{H(\nu),m}),
	\end{equation}
	where $\lambda_{H(i),m}$ indicates the communication mode of the $i$-th dual-mode relay at the $m$-th state and is equal to $0$ if the relay operates in  silent mode, otherwise equals $1$. Note that this parameter is defined only for the dual-mode relays of the network, since the other relays are always active and the operations related to their communication capabilities do not vary with time. Thus, each relay state is a $1\times\nu$ vector that captures all the possible transitions at the dual-mode relays' communication modes and depicts which relays are active at each instance of the network. The total number of relay states is then derived as $M_{R}=2^\nu$.
	
	The network states are constructed by the concatenation of the buffer and relay states that were previously defined. Therefore, the $m$-th network state, $1\leq m\leq M$, is denoted as
	\begin{align}
		\label{net_st}
		s_{m}&\triangleq(v_{m} \: u_{m})\nonumber\\
		&=(\lambda_{H(1),m} \ldots \: \lambda_{H(\nu),m} \: \beta_{1,m} \ldots \: \beta_{N,m}),
	\end{align}
	which represents the joint status of the relays' buffers and the dual-mode relays' communication activity. The total number of network states that are considered for the MC is given by all the possible pairs of buffer and relay states that can be derived and is equal to $M=M_{B}M_{R}=2^{L_{\beta}+\nu}$. Since this concatenation results in a binary representation of a decimal number, the MC states are predefined and arranged in a numerical ascending order, such that the states $s_{1}$ and $s_{M}$ are denoted by the $1\times (L_{\beta}+\nu)$ binary vectors $(00\ldots0)$ and $(11\ldots1)$, respectively.
	
	\subsection{State transition matrix and stationary distribution}
	The state transition matrix is a square matrix containing information on the transition probabilities between the states of the MC. Specifically, it defines how the system evolves with time and indicates which of the available links will be used at each time-slot. Let $\mathbf{A}$ denote the $M \times M$ state transition matrix of the MC, where the entry $p_{l,m}=\mathbb{P}(s_{m}\rightarrow s_{l})=\mathbb{P}(X_{t+1}=s_{l}|X_{t}=s_{m})$ is the probability of the transition from network state $s_{m}$ at time $t$ to state $s_{l}$ at time $(t+1)$. The calculation of these probabilities relies on the communication mode of each dual-mode relay and the status of each relay's buffer, and consequently from the corresponding parameters $\lambda_{i}$ and $\beta_{i}$.

\setlength{\textfloatsep}{4pt}
	\begin{algorithm}[t!]
		\label{algo}
		\SetAlgoLined
		\KwIn{\\$N\leftarrow$ Number of relays\\$H\leftarrow$ Set of dual-mode relays\\$k\leftarrow$ Number of hops}
		\KwOut{\\$\mathbf{A}\leftarrow$ Transition Matrix}
		Compute $L_{i}=\min(k,N+1-i), i=1,\ldots,N$\\
		Compute $L_{\beta}=\left( 2N-k+1\right)\dfrac{k}{2}$\\
		Compute $M_{B}=2^{L_{\beta}}$, $M_{R}=2^{\nu}$, $M=M_{B}M_{R}$\\
		\For{$m=1$ to $M$}{
			\For{$l=1$ to $M$}{
				Assume $s_{m}\rightarrow s_{l}$ exists\\
				\For{$i=1$ to $N$}{
					Get $\beta_{i,m}$, $\beta_{i,l}$\\
					\For{$n=1$ to $L_{i}-1$}{
						\If{$\beta_{i,m}[n] \neq \beta_{i,l}[n+1]$}{
							$s_{m}\rightarrow s_{l}$ does not exist
						}
					}
				}
				Compute $p_{l,m}$ using \eqref{trans_prob}
			}
		}
		\caption{Generation of the state transition matrix.}
	\end{algorithm}
	
	Algorithm \ref{algo} shows the proposed procedure for the construction of the state transition matrix $\mathbf{A}$, given the number of relays $N$, the ordered set of dual-mode relays $H$ and the number of hops $k$. First of all, we derive the size of each array $\beta_i$ and the size of the buffer states. Moreover, we compute the number of buffer and relay states and consequently the number of states of the MC i.e., (lines $1-3$). Then, we need to detect all the possible transitions between the states. Since the decision of the communication mode at each dual-mode relay at time-slot $t$ is independent from what was decided at previous time-slots, the only parameters that affect the validity of a state transition are the arrays $\beta_{i}$, $1\leq i \leq N$, due to their shifting operation. Thus, for each pair of states $ (s_{m},s_{l}) $, we examine if the variations at the elements of each array $\beta_{i}$ are consistent with the evolution of the buffers as in the proposed protocol. Each array $\beta_{i,m}$ can be extracted by isolating the buffer state $u_{m}$ from the network state $s_{m}$ and taking all the elements of the resulting array from the index $u_{m}[\sum_{j=1}^{i-1}L_{j}+1]$ to the index $u_{m}[\sum_{j=1}^{i}L_{j}]$. A transition from $s_{m}$ to $s_{l}$ exists, if the contents of all the arrays shift one position to the right. This is equivalent to the equality $\beta_{i,m}[n] = \beta_{i,l}[n+1]$, $1 \leq n \leq L_{i}-1$, for all the arrays $\beta_{i}$, $i=1,\ldots,N$. If at least one of these equalities does not hold, then the transition is not possible. The first element of each array at the new buffer state $ \beta_{i,l}[1] $ indicates if an outage occurred at the corresponding relay, while the elements of the new relay state indicate which dual-mode relays are active. As the decoding is handled separately by each relay, these probabilities are independent and therefore the transition probability is given as their product. The entries of the state transition matrix are then given by
	\begin{equation}
		\label{trans_prob}
		p_{l,m}=
		\begin{cases}
		&\hspace{-3mm}\displaystyle \prod_{i=1}^{\nu} \displaystyle\prod_{j=1}^{N}\bigg[\lambda_{H(i),l}(1-q) +(1-\lambda_{H(i),l})q\bigg]\\
		&\hspace{-3mm}\times\bigg[ \beta_{j,l}[1] (1-P_{o}(j,m))+\left(1-\beta_{j,l}[1]\right)P_{o}(j,m)\bigg],\\
		&\hspace{-3mm} \hspace{4.5cm}\text{if } s_{m}\rightarrow s_{l} \text{ exists};\\
		&\hspace{-3mm}0, \text{otherwise},
		\end{cases}
	\end{equation} 
	where $P_{o}(j,m)$ is the probability that the $j$-th node has an outage event, given that the network instance is derived by the state $s_{m}$. Note that, due to the two possible values that the elements $\lambda_{H(i),l}$ and $\beta_{j,l}[1]$ can take, the aggregate number of all possible transitions from every state are $2^{\nu+N}$. The analytical expressions for the outage probability are given in Section \ref{analysis}.
	
	We are now able to derive the stationary distribution of the MC, which is denoted as $\boldsymbol{\pi}$. In this case, the interpretation of the stationary distribution gives an insight to the long-term use of the available channel links in the system, as it indicates how the signals are being transmitted across the relays until they reach the final destination. The calculation of the steady states is given in the following Lemma \ref{steady}.
	
	\begin{lem}\label{steady}
		The state transition matrix $\mathbf{A}$ of the defined MC has a unique stationary distribution, which is given by
		\begin{equation}
			\label{steady_state}
			\boldsymbol{\pi} = (\mathbf{A}-\mathbf{I}+\mathbf{B})^{-1}\mathbf{b},
		\end{equation}
		where $\boldsymbol{\pi}$ is the stationary distribution, $\mathbf{b}=(1\: 1\: \ldots \:1)^{T}$ and $\mathbf{B}_{l,m}=1$, $\forall$ $l,m$.
	\end{lem}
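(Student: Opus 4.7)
The plan is to combine the stationarity equation $\mathbf{A}\boldsymbol{\pi} = \boldsymbol{\pi}$ with the normalization $\mathbf{b}^{T}\boldsymbol{\pi} = 1$ into a single nonsingular linear system whose unique solution is \eqref{steady_state}. By construction of \eqref{trans_prob}, the columns of $\mathbf{A}$ sum to one, so $\mathbf{A}$ is column-stochastic and $\mathbf{b}^{T}\mathbf{A} = \mathbf{b}^{T}$, i.e.\ $\mathbf{b}^{T}(\mathbf{A}-\mathbf{I}) = \mathbf{0}^{T}$. Moreover, since $\mathbf{B} = \mathbf{b}\mathbf{b}^{T}$, one has $\mathbf{B}\boldsymbol{\pi} = \mathbf{b}(\mathbf{b}^{T}\boldsymbol{\pi}) = \mathbf{b}$ whenever $\boldsymbol{\pi}$ is a probability vector. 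Adding this identity to $(\mathbf{A}-\mathbf{I})\boldsymbol{\pi} = \mathbf{0}$ yields $(\mathbf{A}-\mathbf{I}+\mathbf{B})\boldsymbol{\pi} = \mathbf{b}$, and \eqref{steady_state} follows as soon as the invertibility of $\mathbf{A}-\mathbf{I}+\mathbf{B}$ is secured.

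To prove invertibility, I would show that the kernel is trivial. Suppose $(\mathbf{A}-\mathbf{I}+\mathbf{B})\mathbf{x} = \mathbf{0}$ and put $c = \mathbf{b}^{T}\mathbf{x}$, so that $\mathbf{B}\mathbf{x} = c\mathbf{b}$ and $(\mathbf{A}-\mathbf{I})\mathbf{x} = -c\mathbf{b}$. Left-multiplying by $\mathbf{b}^{T}$ and using $\mathbf{b}^{T}(\mathbf{A}-\mathbf{I}) = \mathbf{0}^{T}$ gives $cM = 0$, hence $c = 0$. Thus $\mathbf{x}$ lies in $\ker(\mathbf{A}-\mathbf{I})$ and additionally satisfies $\mathbf{b}^{T}\mathbf{x} = 0$. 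If the chain has a unique stationary distribution, then $\ker(\mathbf{A}-\mathbf{I})$ is one-dimensional and spanned by $\boldsymbol{\pi}$; writing $\mathbf{x} = \alpha\boldsymbol{\pi}$ and invoking $\mathbf{b}^{T}\boldsymbol{\pi} = 1$ forces $\alpha = 0$ and hence $\mathbf{x} = \mathbf{0}$.

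Consequently, the whole argument reduces to verifying that the MC admits a unique stationary distribution, which is the main obstacle. I would tackle it by showing that the reachable portion of the state space forms a single recurrent communicating class. The key observation is that in \eqref{trans_prob} the per-link outage probabilities $P_{o}(j,m)$ lie strictly in $(0,1)$ for Rayleigh fading at finite $\SNR$, and the silent-mode probability $q$ lies in $[0,1)$, so every combination of decoding outcomes and mode selections that is admissible in the sense of Algorithm~\ref{algo} has strictly positive probability. Exploiting that the source keeps injecting signals at every slot, I can build, in at most $L_{\beta}$ time-slots, a positive-probability path from any reachable state to the ``full-buffer, all-active'' state $s_{M}$, and symmetrically a one-step path from $s_{M}$ to any prescribed target state by choosing the appropriate joint success/failure pattern together with the matching mode assignments. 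Mutual accessibility then yields a single recurrent class, and the null-space argument above delivers \eqref{steady_state}.

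The delicate step is precisely this communication-class analysis: the algebra reduces to bookkeeping, but the admissibility constraint $\beta_{i,m}[n] = \beta_{i,l}[n+1]$ identified in Algorithm~\ref{algo} restricts which buffer transitions are feasible, and extra care is required in corner cases where some relays are permanently active (so that $q=0$ collapses part of the relay-state factor in \eqref{trans_prob}) or where certain buffer slots are forced to remain empty because the corresponding upstream link was never used. Once this reachability structure is in place, the closed-form expression in \eqref{steady_state} is an immediate algebraic consequence.
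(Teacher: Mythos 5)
Your overall strategy matches the paper's: establish that the chain has a unique stationary distribution (the paper does this by checking that $\mathbf{A}$ is column stochastic, irreducible and aperiodic) and then read off \eqref{steady_state}. You go further than the paper on the algebra: the paper simply cites \cite{krikidis2012} for the closed form, whereas you actually derive it, showing $(\mathbf{A}-\mathbf{I}+\mathbf{B})\boldsymbol{\pi}=\mathbf{b}$ from stationarity plus normalization and proving invertibility by the kernel argument. That derivation is correct and is a genuine improvement in self-containedness. Your omission of aperiodicity is also harmless for this particular statement, since for a finite chain a single recurrent class already guarantees a unique stationary distribution (aperiodicity is only needed for convergence of $X_t$ to it).

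There is, however, a concrete flaw in your reachability argument, which is the step everything else hinges on. You claim a \emph{one-step} path from the full-buffer all-active state $s_{M}$ to any prescribed target state ``by choosing the appropriate joint success/failure pattern.'' This is impossible: by the shift constraint $\beta_{i,m}[n]=\beta_{i,l}[n+1]$, a one-step successor of $s_{M}$ must have $\beta_{i,l}[n]=1$ for all $n\geq 2$; only the first element of each buffer (and the relay modes) is free in a single transition. Reaching an arbitrary state requires a construction over up to $\max_i L_i=k$ consecutive slots, and the per-slot ``failure/success'' choices are not all available with positive probability: if $C_{j,m}=0$ (node $j$ receives from nobody in the current instance) then $P_{o}(j,m)=1$ and the transition assigning $\beta_{j,l}[1]=1$ has probability zero, so your premise that every shift-admissible transition has strictly positive probability is false. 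A correct argument must schedule the success pattern so that whenever you want relay $j$ to decode, at least one of its $k$ upstream nodes is active with the matching buffer slot non-empty (the source guarantees this for relays $1,\ldots,k$ but not beyond). You flag this delicacy yourself at the end, but the proof as written asserts the one-step claim rather than carrying out the multi-slot construction, so the irreducibility step — which the paper also only asserts ``from the structure of the problem'' — remains unproven in your proposal.
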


	\begin{proof}
		See Appendix \ref{proofA}.
	\end{proof}
	In the next section, we provide our main results for the performance analysis of our proposed protocol. 

\section{Outage Probability \& Diversity Analysis}\label{analysis}
	Based on the obtained stationary distribution of the MC, we can now analyze the performance of the proposed protocol, in terms of outage probability and diversity gain. Firstly, we provide an expression of the outage probability at each receiver and then the system's outage probability is derived. To conform with the above notation, we assume that $S$ has an array $\beta_{0}$ of finite size $k$, in which all the elements are equal to one, since $S$ always sends a signal to the $k$ subsequent nodes. 
	
	In general, the $i$-th transmitter sends a signal to the $j$-th receiver of the network, if the corresponding indicator function $\mathds{1}_{(i\rightarrow j),m}$ is equal to one and it is calculated as
	\begin{equation}
		\label{ind_fun}
		\mathds{1}_{(i\rightarrow j),m}=
		\begin{cases}
			\beta_{i,m}[j-i]\lambda_{i,m}, & i \in H;\\
			\beta_{i,m}[j-i], & \text{otherwise}.
		\end{cases}
	\end{equation}
	Thus, the number of nodes that transmit a signal to the $j$-th receiver at state $s_{m}$ is equal to
	\begin{equation}
		\label{card}
		C_{j,m}=\sum_{i=[j-k]^{+}}^{j-1}\mathds{1}_{(i\rightarrow j),m},
		\end{equation}
	which can be at most $k$, based on the presented protocol. Since the signals are transmitted coherently, the outage probability achieved at the $j$-th node is given as follows.
	\begin{thm}\label{out_prob_thm}
		The probability of having an outage event at the $j$-th node is
		\begin{multline}
			\label{out_prob_node}
			P_{o}(j,m)=\dfrac{1}{2}-\dfrac{1}{\pi}\int_{0}^{\infty}\dfrac{1}{t}\Im \bigg[\exp\left(-\jmath t \sqrt{\dfrac{\gamma \sigma^{2}}{P}}\right)\\
			\times\prod_{i=[j-k]^{+}}^{j-1}\phi_{i,j}(t)^{\mathds{1}_{(i\rightarrow j),m}}\bigg]dt,
		\end{multline}
		\noindent where $\phi_{i,j}(t)$ is the characteristic function of $|h_{i,j}|$ and is equal to
		\begin{equation}
			\label{char_fun}
			\phi_{i,j}(t)=1+\jmath t\sqrt{\dfrac{2\pi a_{i,j}}{d_{i,j}^{\eta}}}\exp\bigg(-\dfrac{a_{i,j}t^2}{2d_{i,j}^{\eta}}\bigg)\Phi\bigg(\jmath t\sqrt{\dfrac{a_{i,j}}{d_{i,j}^{\eta}}}\bigg).
		\end{equation}
	\end{thm}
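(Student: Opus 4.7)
The plan is to rewrite the outage event as a threshold crossing for a sum of independent (weighted) Rayleigh variables, then invert its characteristic function via the Gil--Pelaez formula, and finally derive the per-link characteristic function $\phi_{i,j}(t)$ by a direct Gaussian integral.

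First, using the expression \eqref{rec_SNR} for $\SNR_j$ and the fact that the summands are non-negative, the outage event $\{\SNR_j<\gamma\}$ is equivalent to
\begin{equation*}
Z_{j,m}\triangleq\sum_{i=[j-k]^{+}}^{j-1}\mathds{1}_{(i\rightarrow j),m}\,|h_{i,j}|\sqrt{d_{i,j}^{-\eta}a_{i,j}}<\sqrt{\gamma\sigma^{2}/P},
\end{equation*}
where the summation range $[j-k]^{+}\le i\le j-1$ captures the fact that only the preceding $k$ nodes can transmit to node $j$ under the $k$-hop myopic protocol, and $\mathds{1}_{(i\rightarrow j),m}$ is the activation indicator from \eqref{ind_fun}. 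Thus $P_{o}(j,m)=\mathbb{P}[Z_{j,m}<\sqrt{\gamma\sigma^{2}/P}]$ is simply the cdf of $Z_{j,m}$ at the threshold $\sqrt{\gamma\sigma^{2}/P}$.

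Second, since the fading magnitudes $|h_{i,j}|$ are independent across $i$ (by the i.i.d. channel assumption), the characteristic function of $Z_{j,m}$ factorises as
\begin{equation*}
\phi_{Z_{j,m}}(t)=\prod_{i=[j-k]^{+}}^{j-1}\phi_{i,j}(t)^{\mathds{1}_{(i\rightarrow j),m}},
\end{equation*}
where the indicator appears as an exponent because an inactive link contributes $0$ to the sum and hence a factor of $1$ to the product. Applying the Gil--Pelaez inversion theorem, which expresses any cdf as $F_{X}(x)=\tfrac{1}{2}-\tfrac{1}{\pi}\int_{0}^{\infty}t^{-1}\,\Im[e^{-\jmath tx}\phi_{X}(t)]\,dt$, at the threshold $x=\sqrt{\gamma\sigma^{2}/P}$ immediately yields \eqref{out_prob_node}.

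Third, it remains to establish \eqref{char_fun}. For this, note that $|h_{i,j}|\sqrt{d_{i,j}^{-\eta}a_{i,j}}$ is a Rayleigh variable whose scale parameter is multiplied by $\sqrt{a_{i,j}/d_{i,j}^{\eta}}$. Writing $c_{i,j}^{2}=a_{i,j}/d_{i,j}^{\eta}$, the characteristic function is
\begin{equation*}
\phi_{i,j}(t)=\int_{0}^{\infty}\frac{x}{c_{i,j}^{2}}\exp\!\left(-\frac{x^{2}}{2c_{i,j}^{2}}+\jmath tx\right)dx.
\end{equation*}
Completing the square in the exponent, shifting the contour, and recognising the residual Gaussian tail integral as a complex-argument standard normal cdf $\Phi(\jmath tc_{i,j})$ produces the stated closed form.

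The main obstacle is the final step: evaluating the Gaussian integral with complex shift and casting the answer in the clean form involving $\Phi(\jmath t\sqrt{a_{i,j}/d_{i,j}^{\eta}})$. It is a standard but slightly delicate manipulation (analytic continuation of the error function), whereas the structural part of the proof --- reducing the SNR condition to a Rayleigh-sum cdf and invoking Gil--Pelaez with the product-form characteristic function --- is straightforward once the independence and indicator-as-exponent observations are in place.
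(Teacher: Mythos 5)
Your proposal is correct and follows essentially the same route as the paper's proof: reduce the outage event to the cdf of a weighted sum of independent Rayleigh variables, factorise its characteristic function over the active links (with the activation indicator as exponent), and invert via Gil--Pelaez; the paper simply cites Papoulis for the Rayleigh characteristic function where you sketch its Gaussian-integral derivation. No gaps.
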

	\begin{proof}
		See Appendix \ref{proofB}.
	\end{proof}
	\noindent It is clear that for the case where the $j$-th node receives a signal from only one transmitter $i$, the outage probability can be given as the cdf of an exponential distribution i.e.,
	\begin{equation}
		\label{cdf_exp}
		P_o(j,m)=1-\exp\left(-\dfrac{\gamma \sigma^{2}d_{i,j}^{\eta}}{a_{i,j}P}\right).
	\end{equation}
	
 	In the following proposition, we provide a closed form approximation of the derived outage probability expression, based on the small argument approximation (SAA) \cite{hu2005}.
 	\begin{prop}\label{saa_prop}
 		Under the SAA, the outage probability at the $j$-th receiver is approximated by
 		\begin{equation}
 			\label{out_prob_saa}
 			P_{o}(j,m)\approx 1-\exp\bigg(-\dfrac{\gamma \sigma^{2}}{2P\theta_{j,m}}\bigg)\sum_{c=0}^{C_{j,m}-1}\dfrac{1}{c!}\bigg(\dfrac{\gamma \sigma^{2}}{2P\theta_{j,m}}\bigg)^c,
 		\end{equation}
 	
 		where
 		
 		\begin{equation}
	 		\theta_{j,m}=\frac{1}{C_{j,m}}[(2C_{j,m}-1)!!]^{1/C_{j,m}}\sum_{i=[j-k]^{+}}^{j-1}\frac{a_{i,j}}{d_{i,j}^{\eta}}\mathds{1}_{(i\rightarrow j),m}.
 		\end{equation}
 	\end{prop}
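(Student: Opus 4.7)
The plan is to prove the claim by reducing $P_{o}(j,m)$ to the CDF of a sum of $C_{j,m}$ independent Rayleigh random variables at a fixed argument, and then invoking the SAA of Hu--Beaulieu~\cite{hu2005}. Starting from (\ref{rec_SNR}), the event $\SNR_j<\gamma$ is equivalent to $Z\leq z_0$, where $Z = \sum_{i=[j-k]^{+}}^{j-1}\mathds{1}_{(i\rightarrow j),m}\sqrt{a_{i,j}/d_{i,j}^{\eta}}\,|h_{i,j}|$ and $z_0 = \sqrt{\gamma\sigma^2/P}$. Each contributing term is a scaled Rayleigh with parameter $s_i^2 = a_{i,j}/d_{i,j}^{\eta}$, and by (\ref{card}) the number of nonzero terms is exactly $C_{j,m}$, so $Z$ is a sum of $C_{j,m}$ independent (generally non-identical) Rayleighs.

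The core technical step is to approximate the density of this sum near the origin. For the i.i.d.\ case with common parameter $s$, I would drop the Gaussian factor in each $f_{X_i}$, leaving the small-argument form $x/s^2$, and then evaluate the Dirichlet-type integral $\int\prod_i x_i\,\delta(z-\sum_i x_i)\,dx = z^{2C_{j,m}-1}/(2C_{j,m}-1)!$, yielding $f_Z(z)\approx z^{2C_{j,m}-1}/[(2C_{j,m}-1)!\,s^{2C_{j,m}}]$. Matching this against the small-$z$ expansion of a Nakagami-$m$ density of shape $m=C_{j,m}$ and spread $\Omega$, together with the identity $(2C_{j,m}-1)! = 2^{C_{j,m}-1}(C_{j,m}-1)!\,(2C_{j,m}-1)!!$, gives $\Omega = 2C_{j,m}[(2C_{j,m}-1)!!]^{1/C_{j,m}}s^2$; equivalently, $Z^2$ is approximately Gamma-distributed with shape $C_{j,m}$ and scale $2[(2C_{j,m}-1)!!]^{1/C_{j,m}}s^2$.

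For the non-identical setting of the proposition, I would follow the standard Hu--Beaulieu prescription of substituting the averaged power $s_{\mathrm{eff}}^2 = C_{j,m}^{-1}\sum_{i}s_i^2\,\mathds{1}_{(i\rightarrow j),m}$ for the common $s^2$, which reproduces $\theta_{j,m}$ exactly as defined. The proof then concludes by writing down the closed-form CDF of the integer-shape Gamma distribution: for $Y\sim\mathrm{Gamma}(C_{j,m},2\theta_{j,m})$,
\begin{equation*}
\mathbb{P}(Y\leq y) = 1 - \exp\!\left(-\frac{y}{2\theta_{j,m}}\right)\sum_{c=0}^{C_{j,m}-1}\frac{1}{c!}\left(\frac{y}{2\theta_{j,m}}\right)^{c},
\end{equation*}
applied to $Y = Z^2$ and $y = \gamma\sigma^2/P$.

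The main obstacle will be justifying the step from identical to non-identical Rayleighs: a strict leading-order Taylor match naturally produces the geometric-type factor $\prod_i s_i^2$, whereas the proposition features the arithmetic sum $\sum_i s_i^2$. This discrepancy is intrinsic to the SAA heuristic, which preserves the Nakagami shape $m = C_{j,m}$ (and thus the diversity order visible in the low-outage behavior) while matching total received power rather than each individual $s_i$. I would appeal to \cite{hu2005} for the legitimacy of this substitution and validate its accuracy numerically against the exact expression in Theorem~\ref{out_prob_thm}.
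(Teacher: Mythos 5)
Your proposal reaches the correct final formula, and your reconstruction of the i.i.d.\ small-argument approximation (matching the leading-order density of a Rayleigh sum to a Nakagami density of shape $C_{j,m}$ with the $[(2C_{j,m}-1)!!]^{1/C_{j,m}}$ spread correction, then writing the integer-shape Gamma cdf) is exactly the content of \cite{hu2005}. Where you genuinely diverge from the paper is at the step you yourself flag as the main obstacle: passing from identical to non-identical weights. You do this by heuristically substituting the arithmetic-mean power $C_{j,m}^{-1}\sum_i s_i^2$ for the common $s^2$, and you correctly note that a strict leading-order Taylor match would instead produce a geometric-type factor $\prod_i s_i^2$; appealing to \cite{hu2005} cannot close that gap, since that reference treats only the i.i.d.\ case. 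The paper sidesteps the issue by first invoking Hitczenko's inequality \cite{hitczenko1998} for weighted sums of i.i.d.\ Rayleigh variables,
\begin{equation*}
\mathbb{P}\Bigl[\textstyle\sum_i w_i |h_{i,j}| < z\Bigr] \;\geq\; \mathbb{P}\Bigl[\textstyle\sum_i |h_{i,j}| < z\sqrt{C_{j,m}/g_{j,m}}\Bigr], \qquad g_{j,m}=\textstyle\sum_i w_i^2,
\end{equation*}
with $w_i=\sqrt{a_{i,j}d_{i,j}^{-\eta}}\,\mathds{1}_{(i\rightarrow j),m}$, which converts the weighted sum into an unweighted i.i.d.\ sum at a rescaled argument; the arithmetic sum $g_{j,m}=\sum_i a_{i,j}d_{i,j}^{-\eta}\mathds{1}_{(i\rightarrow j),m}$ then enters exactly, and the SAA is applied only in the i.i.d.\ setting where \cite{hu2005} actually applies. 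The trade-off is that the paper's closed form is the SAA of a lower bound on $P_o(j,m)$ rather than of $P_o(j,m)$ itself, whereas your route targets the exact cdf directly but rests on the unjustified arithmetic-mean substitution. To make your argument rigorous, replace the power-matching heuristic with the Hitczenko inequality (or cite an extension of the SAA to non-identically distributed summands); the rest of your derivation can stand as is.
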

 
 	\begin{proof}
 		From \eqref{rec_SNR} it is observed that the channel gain at each receiver consists of a weighted sum of i.i.d. Rayleigh random variables. For the distribution of the weighted Rayleigh sum, the following inequality holds \cite{hitczenko1998}
 		\begin{align}
 		\label{Ray_sum_ineq}
 		P_{o}(j,m)&=\mathbb{P}\left[\sum_{i=[j-k]^{+}}^{j-1} |h_{i,j}|\sqrt{\dfrac{a_{i,j}}{d_{i,j}^{\eta}}}\mathds{1}_{(i\rightarrow j),m}<\sqrt{\dfrac{\gamma \sigma^{2}}{P}}\right]\nonumber\\
 		&\geq \mathbb{P}\left[\sum_{i=[j-k]^{+}}^{j-1} |h_{i,j}|\mathds{1}_{(i\rightarrow j),m}<\sqrt{\dfrac{C_{j,m}\gamma \sigma^{2}}{g_{j,m}P}}\right],
 		\end{align}
 		where $g_{j,m}=\sum_{i=[j-k]^{+}}^{j-1}a_{i,j}d_{i,j}^{-\eta}\mathds{1}_{(i\rightarrow j),m}$. Note that the right-hand side of \eqref{Ray_sum_ineq} follows the distribution of a normalized Rayleigh sum. The approximated expression follows by taking the right-hand side of the inequality and then by using the SAA to the cdf of the normalized sum as in \cite{hu2005}.
 	\end{proof}

	\noindent The outage probability of the system $ P_{out}(\gamma)$ can be calculated by using the steady state of the MC along with the probability of an outage event at the destination. Thus, $P_{out}(\gamma)$ can be expressed as\vspace{-1mm}
	\begin{equation}
		\label{out_prob_sys}
		P_{out}(\gamma)=\sum_{m=1}^{M}\boldsymbol{\pi}_{m}P_{o}(N+1,m).
	\end{equation} 

	From the derived expressions for the achieved outage probability, it can be seen that the system's performance depends on the number of relays ($N$), the number of hops ($k$) and the set of relays that have dual-mode operations ($H$). In order to provide more insights on the performance of such networks, we need to explore in greater detail how these parameters affect the overall outage probability. In particular, in the next section the performance of our protocol is investigated in the high $\SNR$ regime and the DMT is derived.\vspace{-2mm}
	
	\subsection{Diversity-multiplexing tradeoff}\label{dmt_sect}
	In this section, we use the presented outage expressions to derive a tradeoff between the diversity and multiplexing gains for the proposed protocol. In general, a channel achieves multiplexing gain $\rho$ and a corresponding diversity gain $\delta^{*}(\rho)$, if the target data rate $R(P)\sim\rho\log P$ and the outage probability $P_{out}(P)$ satisfy the conditions \cite{zheng2003}\vspace{-1mm}

	\begin{equation} 
		\label{mult_div_gain}
		\lim_{P\rightarrow \infty}\dfrac{R(P)}{\log P}=\rho \quad \text{and} \quad
		\lim_{P\rightarrow \infty}-\dfrac{\log P_{out}(P)}{\log P}=\delta^{*}(\rho).
	\end{equation}
	Below, we provide the proposition which characterizes the DMT that the proposed protocol can achieve.

	\begin{thm}\label{dmt_thm}
		The DMT achieved by the proposed protocol for the considered multi-hop network is given by\vspace{-1mm}
		\begin{equation}
			\label{dmt_gen}
			\delta^{*}(\rho)=(1-\rho)\min_{\substack{1\leq m\leq M \\ \boldsymbol{\pi}_{m}\nrightarrow 0}} \mathcal{L}_{m},\: \rho\in[0,1],
		\end{equation}
		where $\mathcal{L}_{m}$ denotes the minimum number of disjoint paths from $S$ to $D$ that can be obtained for the network instance derived by the state $s_{m}$.
	\end{thm}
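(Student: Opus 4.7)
The plan is to extract the high-SNR asymptotics of the state-wise decomposition $P_{out}(\gamma)=\sum_{m}\pi_{m}P_{o}(N+1,m)$ from \eqref{out_prob_sys} and match it to the DMT definition \eqref{mult_div_gain}. Throughout I would work with exponential equality, writing $f(P)\doteq P^{-d}$ for $-\lim_{P\to\infty}\log f(P)/\log P=d$, and setting $\gamma\doteq P^{\rho}$ via $R(P)=\rho\log P$.

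First, I would determine the scalings of the two factors in each summand. Inserting $\gamma\doteq P^{\rho}$ into the SAA expression of Proposition \ref{saa_prop} and keeping the leading Taylor term of $1-e^{-x}\sum_{c=0}^{C-1}x^{c}/c!$ around $x\to 0$ gives $P_{o}(N+1,m)\doteq P^{-C_{N+1,m}(1-\rho)}$ when $C_{N+1,m}\ge 1$, and $P_{o}(N+1,m)=1$ otherwise. For $\pi_{m}$, every transition probability in \eqref{trans_prob} factorises into a $P$-independent dual-mode coin factor and per-receiver Bernoulli factors whose failure component is $\doteq P^{-C_{j,m}(1-\rho)}$ and whose success component tends to one. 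Solving $\boldsymbol{\pi}=\mathbf{A}^{\top}\boldsymbol{\pi}$ perturbatively in the small parameter $P^{-(1-\rho)}$ then yields $\pi_{m}\doteq P^{-\alpha_{m}(1-\rho)}$, where $\alpha_{m}\ge 0$ is the minimum combining-weighted number of decoding failures needed to reach $m$ from a fully-successful ancestor $m^{*}$ that shares the same dual-mode configuration but has every compatible buffer slot filled; the states with $\pi_{m}\nrightarrow 0$ are precisely these ancestors.

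The critical step is the graph-theoretic identification
\[
\min_{m\text{ derived from }m^{*}}(\alpha_{m}+C_{N+1,m})=\mathcal{L}_{m^{*}}
\]
for every non-vanishing parent $m^{*}$. By Menger's theorem $\mathcal{L}_{m^{*}}$ equals the $S$-$D$ min-cut of the active transmission graph $G_{m^{*}}$ induced by $m^{*}$. Any failure sequence achieving $C_{N+1,m}=c$ must delete edges of $G_{m^{*}}$ that drop the $S$-$D$ connectivity from $\mathcal{L}_{m^{*}}$ to $c$, and hence cut at least $\mathcal{L}_{m^{*}}-c$ edge-disjoint paths; since a combining failure at a relay fed by $c'$ incoming signals costs $c'$ units in $\alpha_{m}$ but invalidates at most $c'$ downstream edges, the amortised per-edge cost is one unit, giving $\alpha_{m}+C_{N+1,m}\ge \mathcal{L}_{m^{*}}$ with equality attained by failing exactly the decoding events whose blocked edges form a minimum cut. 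Taking the outer minimum over non-vanishing parents $m^{*}$ and substituting into \eqref{mult_div_gain} yields the claim. The hardest part will be this flow-counting step: combining-weighted failures and unit-weight min-cut edges are not in natural bijection, so the amortisation has to be justified by assigning each cut edge to a distinct upstream failure, using the independence of the Rayleigh fading across both links and time slots that underlies the product form of $\pi_{m}$.
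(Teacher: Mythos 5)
Your route is genuinely different from the paper's. The paper's Appendix C argues at a higher level: it notes that at high $\SNR$ decoding succeeds almost surely whenever a node has at least one active transmitter, identifies which steady states survive, builds for each surviving state $s_m$ a flow graph $G_m$ (omitting the dual-mode relays), invokes the max-flow min-cut theorem to upper-bound the per-state DMT by $\mathcal{L}_m$, and cites the multi-hop DMT literature for achievability. You instead compute the exponential order of every summand $\boldsymbol{\pi}_{m}P_{o}(N+1,m)$, including the vanishing states, and reduce the theorem to the identity $\min_{m}(\alpha_{m}+C_{N+1,m})=\mathcal{L}_{m^{*}}$. This is more self-contained and in one respect more careful than the paper: the paper simply ``omits'' the terms with $\boldsymbol{\pi}_{m}\rightarrow 0$, yet such a term can have $P_{o}(N+1,m)=1$, so one must actually show its order $\alpha_{m}$ is at least the min-cut --- which is precisely what your identity asserts. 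Your perturbative claim $\boldsymbol{\pi}_{m}\doteq P^{-\alpha_{m}(1-\rho)}$ also needs a citation or argument (e.g., the Markov chain tree theorem), since the paper never establishes the order of the vanishing steady-state probabilities.

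There is, however, a concrete flaw in your amortization step. You claim a decoding failure at a relay fed by $c'$ incoming signals ``invalidates at most $c'$ downstream edges.'' Under the protocol, a failed decoding of $x(t)$ at relay $j$ leaves $b_{j}[1]$ empty, and that empty slot is shifted through all positions of the buffer, so it blocks \emph{all} $L_{j}=\min(k,N-j+1)$ outgoing links of $R_{j}$ for that packet; $L_{j}$ can exceed $c'$ (for $N=2$, $k=2$, relay $R_{1}$ has in-degree $1$ but out-degree $2$). Counting edges therefore gives a per-edge cost below one and does not yield $\alpha_{m}+C_{N+1,m}\geq\mathcal{L}_{m^{*}}$. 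The repair is to amortize over edge-disjoint paths rather than edges: at most $\min(c',L_{j})\leq c'$ edge-disjoint $S$--$D$ paths can pass through the vertex $R_{j}$ (each uses a distinct in-edge), so a failure of cost $c'$ destroys at most $c'$ of the $\mathcal{L}_{m^{*}}$ paths, and the lower bound follows; equivalently, work with a vertex-capacitated cut where each relay's capacity is its in-degree in the dynamically reduced graph. You should also state explicitly that the cost of each failure is the in-degree \emph{after} upstream failures and dual-mode silences are applied, and that silences carry exponent zero --- which is why the outer minimum ranges over all non-vanishing states and effectively deletes the dual-mode relays from the flow graph, exactly as in the paper's construction of $G_{m}$.
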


	\begin{proof}
		See Appendix \ref{proofC}.
	\end{proof}
	
	\begin{cor}
		\label{cor_div}
		The maximum diversity order of the proposed protocol for a given network topology is achieved when $\rho=0$ and is equal to $\delta^{*}(0)=\displaystyle\min_{\substack{1\leq m\leq M \\ \boldsymbol{\pi}_{m}\nrightarrow 0}} \mathcal{L}_{m}$.
	\end{cor}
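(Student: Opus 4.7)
The plan is to obtain Corollary \ref{cor_div} as an immediate consequence of Theorem \ref{dmt_thm} by inspecting the dependence of $\delta^{*}(\rho)$ on $\rho$ over the admissible interval $\rho\in[0,1]$. The quantity $\min_{m:\boldsymbol{\pi}_{m}\nrightarrow 0}\mathcal{L}_{m}$ is a purely topological/combinatorial constant of the network (it counts disjoint $S$-to-$D$ paths in the worst recurrent state), so the entire $\rho$-dependence of $\delta^{*}(\rho)$ is carried by the scalar factor $(1-\rho)$.

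First I would argue that $\delta^{*}(\rho)$ is strictly decreasing on $[0,1]$. Since the minimum over recurrent states is a fixed non-negative integer independent of $\rho$, the map $\rho\mapsto(1-\rho)\min_{m:\boldsymbol{\pi}_{m}\nrightarrow 0}\mathcal{L}_{m}$ is affine and non-increasing in $\rho$, and strictly decreasing whenever the minimum is positive (which it is, as long as at least one recurrent state admits a path from $S$ to $D$, which is guaranteed by the protocol description in Section \ref{strategy}). Hence the maximum of $\delta^{*}(\rho)$ over $\rho\in[0,1]$ is attained at the left endpoint $\rho=0$.

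Next, I would substitute $\rho=0$ directly into \eqref{dmt_gen} to obtain
\begin{equation*}
\delta^{*}(0)=\min_{\substack{1\leq m\leq M \\ \boldsymbol{\pi}_{m}\nrightarrow 0}} \mathcal{L}_{m},
\end{equation*}
which is exactly the statement of the corollary. Operationally, $\rho=0$ corresponds to a fixed target rate that does not scale with $\log P$, which is the standard definition of the diversity order \cite{zheng2003}, so interpreting this endpoint as the ``maximum diversity order'' is consistent with the framework used to state Theorem \ref{dmt_thm}.

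There is essentially no obstacle here: the corollary is a one-line specialization of \eqref{dmt_gen}. The only mild point worth making explicit is that the minimum is taken only over recurrent states ($\boldsymbol{\pi}_{m}\nrightarrow 0$), so no degenerate transient state can artificially lower the diversity; this is already baked into the statement of Theorem \ref{dmt_thm} and therefore inherited by the corollary without additional work.
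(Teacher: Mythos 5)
Your proposal is correct and coincides with the paper's own (implicit) treatment: the corollary is stated without proof as an immediate specialization of Theorem \ref{dmt_thm} at $\rho=0$, exactly as you argue. One small caveat: your parenthetical claim that the minimum is always positive is not true in general (the paper itself notes $0\leq\delta^{*}(0)\leq k$, with the value $0$ occurring e.g.\ when an outage floor exists), but this does not affect the conclusion since $(1-\rho)$ times a non-negative constant is still maximized at $\rho=0$.
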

	
	We can observe from Corollary \ref{cor_div} that the maximum diversity order can vary between zero and $k$ i.e., $0\leq\delta^{*}(0)\leq k$. The exact behavior of the network in the high $\SNR$ regime depends significantly on which relays within the network are dual-mode. Below, we examine how the diversity order of the protocol is affected in different topology scenarios. 
	
	\subsubsection{Only dual-mode relays ($\nu=N$)}
	In this scenario, all the relays of the network are dual-mode and so they can transmit information only when they are active. Even though at high $\SNR$ each node is able to decode all of its received signals, when all the transmitters of a node are concurrently silent during one time-slot, the node does not receive any signal and so the relays' buffers may still have empty elements. The outage performance of the network is then dominated by the terms for which the destination does not receive information from any relay. Therefore, the system's performance converges to an outage floor value, unless there is a direct link from $S$ to $D$ i.e., $k=N+1$, which results in diversity order equal to one. For $k\leq N\leq2k$, we state the following proposition.
	\begin{prop}\label{floor_prop}
		The outage floor value of the considered network for $k<N\leq2k$ is given by
		\begin{equation}\label{floor_a}
			e(N,k)=q^{k}e(N-2,k-1)+(1-q^{k})e(N-1,k),
		\end{equation}
		and for $N=k$ is equal to
		\begin{equation}\label{floor_b}
			e(N,k)=q^{k}.
		\end{equation}
	\end{prop}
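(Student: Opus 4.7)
The plan is to observe that at asymptotically high $\SNR$, each decoding succeeds whenever at least one active transmitter in range still holds the symbol, so the outage floor is determined purely by the activity pattern during one symbol's propagation. Fixing a symbol $x(\tau)$, I will introduce the indicator $H_{i} \in \{0,1\}$ of the event that node $i$ eventually holds it, with $H_{0}=1$ and, for $j \ge 1$, $H_{j}=1$ exactly when some $i \in \{[j-k]^{+},\ldots,j-1\}$ has $H_{i}=1$ and is active at the time-slot $\tau+j-1$ when it would transmit to node $j$. Because a given relay's activities toward different receivers occur in distinct time-slots, every Bernoulli$(1-q)$ variable entering this recursion is mutually independent, and the outage floor reduces to the directed bond-percolation probability $\mathbb{P}[H_{N+1}=0]$ on the myopic DAG.

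\textbf{Base case $N=k$.} Every $R_{i}$ with $1 \le i \le k$ is within $k$ hops of $S$, so $H_{1}=\cdots=H_{k}=1$ almost surely. The destination $D=R_{k+1}$ is reached only from these $k$ relays, and therefore fails to decode precisely when all of them are silent at time $\tau+k$; mutual independence then gives $e(k,k)=q^{k}$, establishing \eqref{floor_b}.

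\textbf{Recursion for $k<N\le 2k$.} I will condition on $E=\{H_{k+1}=0\}$. Since $H_{1}=\cdots=H_{k}=1$ hold deterministically, $E$ occurs exactly when all $k$ candidate transmitters to $R_{k+1}$ are silent at time $\tau+k$, so $\mathbb{P}[E]=q^{k}$. On $E$, relay $R_{k+1}$ never forwards the symbol; moreover $R_{1}$ is redundant at high $\SNR$ because each of $R_{2},\ldots,R_{k}$ is already reached directly from $S$, while $R_{k+1}$ is guaranteed inert. Deleting $R_{1}$ and $R_{k+1}$ and relabelling the remaining $N-2$ relays compresses every admissible edge: within-block edges keep their length, while each edge crossing the $R_{k+1}$ gap loses exactly one unit, so every original $k$-hop constraint becomes a new $(k-1)$-hop constraint. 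The activities driving the downstream dynamics live in time-slots $\ge \tau+k+1$ and are independent of $E$, so the conditional outage equals that of an $(N-2,k-1)$ network, i.e.\ $\mathbb{P}[\text{outage}\mid E]=e(N-2,k-1)$. On $E^{c}$, the frontier $\{R_{2},\ldots,R_{k+1}\}$ all hold the symbol; deleting only the redundant $R_{1}$ and relabelling reproduces exactly the initial configuration of an $(N-1,k)$ network, whose first $k$ relays start with the signal, so $\mathbb{P}[\text{outage}\mid E^{c}]=e(N-1,k)$. Combining by total probability yields \eqref{floor_a}.

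\textbf{Main obstacle.} The technical crux is verifying the $E$-reduction: one must check that after deleting $R_{k+1}$ and reindexing, no spurious edge of reindexed length $k$ appears and no legitimate original edge is lost. The hypothesis $N \le 2k$ is precisely the range in which any edge straddling the deleted vertex has original length at most $k$ and therefore reindexed length at most $k-1$; once $N > 2k$ this compression breaks because a post-gap edge of original length $k$ survives with reindexed length $k$, spoiling the clean $(N-2,k-1)$ identification, which is consistent with the proposition's stated range.
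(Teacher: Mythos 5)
Your proof is correct and follows essentially the same route as the paper's Appendix D: the first $k$ relays deterministically hold the signal since $S$ is always active, the base case $N=k$ gives $q^{k}$, and conditioning on whether $R_{k+1}$ receives splits the outage floor into the two branches identified with $(N-2,k-1)$ and $(N-1,k)$ sub-networks. Your explicit treatment of the time-slot independence of the Bernoulli activity variables and of the edge-length compression after deleting $R_{1}$ and $R_{k+1}$ (hence of why the range $N\le 2k$ is needed) is more detailed than the paper's argument, which simply asserts the equivalence of the reduced topologies.
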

	\begin{proof}
		See Appendix \ref{proofD}.
	\end{proof}
	\noindent For $N>2k$ the outage floor value can be also extracted by following a similar approach. However, the expressions of these cases are more complex and their exact derivation is out of the scope of this paper.

	\subsubsection{Only active relays ($\nu=0$)}
	In contrast to the previous scenario, in this case, all the relays are always active and so they are able to transmit their decoded signals at every time-slot. By considering $P\rightarrow \infty$, we notice that the outage probability at each relay and for every received signal converges to zero. This implies that the stationary distribution of the transition matrix is given by $\boldsymbol{\pi}_{m}\rightarrow 0$, $m=1,\ldots,M-1$, and $\boldsymbol{\pi}_{M}\rightarrow 1$, as all the buffers are full. If we follow the same procedure as in the proof of Theorem \ref{dmt_thm} we conclude that for the specific scenario the DMT of the investigated model is equal to 
	\begin{equation}
		\label{dmt_active}
		\delta^{*}(\rho)=k(1-\rho),\: \rho\in[0,1].
	\end{equation}
	and for $\rho=0$ it can achieve a maximum diversity order equal to the number of hops\footnote{The proof of the maximum diversity order that is achieved for the specific scenario can be found in \cite{nicolaides2020}.} i.e., $\delta^{*}(0)=k$.

	\subsubsection{Deployment strategy for $0<\nu<N$}
	In the previous two scenarios we presented the two extreme cases that a network topology can be found, regarding the number of dual-mode relays. Any intermediate scenario, where only part of the relays are dual-mode i.e., $0<\nu<N$, is expected to have a performance that lies between these limits. As previously stated, the selection of which relays will have dual-mode operations can significantly affect the network's performance. The proposed protocol can achieve diversity gain, if a signal can be received from $D$ through transmission paths which consist only of active relays, since the outage probability in these instances will not depend on $q$ and it will converge to zero. Specifically, a network can achieve diversity order $\delta^{*}(0)=\epsilon$, $1\leq \epsilon \leq k-1$, if the number of dual-mode relays satisfies the condition\vspace{-1mm}
	\begin{equation}
		k-\epsilon\leq\nu\leq \bigg\lceil\dfrac{N}{k}\bigg\rceil(k-\epsilon),
	\end{equation}
	and each subset of $k$ consecutive relays contains a maximum number of $k-\epsilon$ dual mode relays.

	\subsection{Multi-branch multi-hop network}
	In general, the adaptation of the network deployment, according to the previous cases, is not always achievable or the dual-mode operation may refer to conditions for which our intervention is not feasible. To overcome this issue, the aforementioned framework can be also generalized to multi-branch networks \cite{ribeiro2005}. Specifically, we consider a cooperative system of $Z$ orthogonal branches with common source and destination nodes, where each branch consists of $N$ intermediate relays $R_i^z$, $1 \leq i \leq N$, $1 \leq z \leq Z$. For each branch a $k_{z}$-myopic DF strategy is employed independently, where $1 \leq k_{z} \leq N+1$. Moreover, each branch may choose its $\nu_{z}$ dual-mode relays differently. An example of this topology is shown in Fig.~\ref{fig:model_mult}.
	
	\begin{figure}[t!]\centering
		\includegraphics[width=0.95\linewidth]{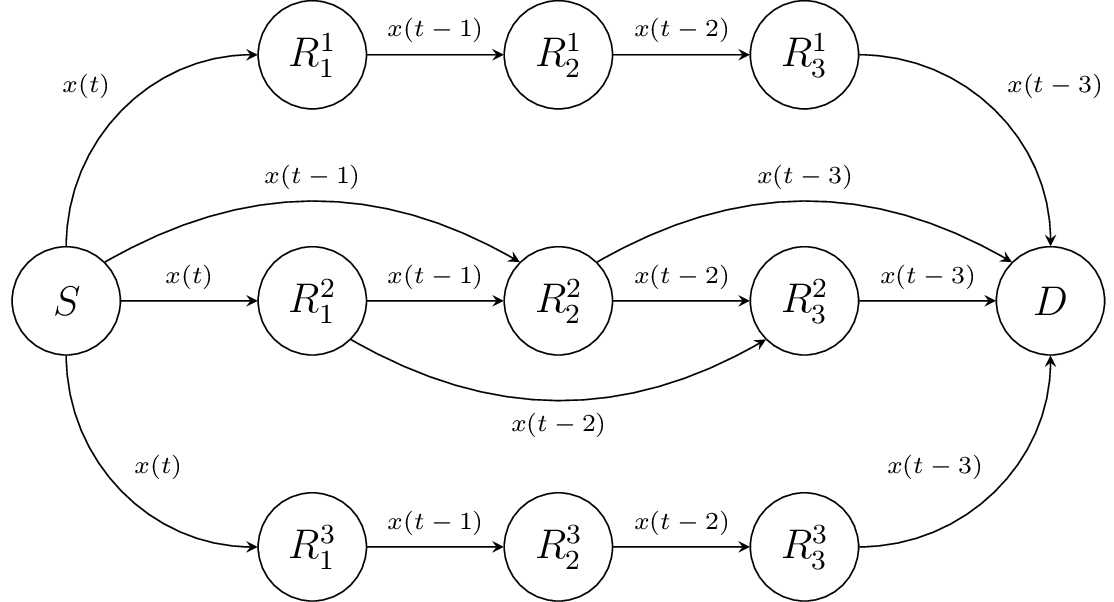}\vspace{-3mm}
		\caption{Topology of a wireless network with three branches and three relays at each branch. The branches employ the $k_{z}$-hop myopic DF strategy independently, where $k_1=k_3=1$ and $k_2=2$.}
		\label{fig:model_mult}
	\end{figure}

	We assume that the destination $D$ receives the transmitted signals by performing the selection combining (SC) technique \cite{goldsmith2005}. In this case the combiner chooses the signal of the branch with the highest $\SNR$. Since the branches in this case are orthogonal, the previous outage probability analysis can be performed at each branch separately. Therefore, the system's outage probability for the multi-branch scenario is given by	
	\begin{equation}
		\label{out_prob_mult}
		P_{out}(Z,\gamma)=\prod_{z=1}^{Z}P_{out}^{z}(\gamma)=\prod_{z=1}^{Z}\sum_{m_z=1}^{M_z}\boldsymbol{\pi}_{m_z}P_{o}(N+1,m_z),
	\end{equation}
	where $P_{o}(\cdot)$ is given by \eqref{out_prob_node}. Note that if the number of hops is the same for all the participating branches, i.e. $k_{z}=k$, $1\leq z \leq Z$, and each branch has equal number of dual-mode relays at the same position, then the outage probability of the system can be rewritten as
	
	\begin{equation}
		\label{out_prob_same_k}
		P_{out}(Z,\gamma)=\left[\sum_{m=1}^{M}\boldsymbol{\pi}_{m}P_{o}(N+1,m)\right]^Z.
	\end{equation}
	
	Regarding the high $\SNR$ regime, the following proposition provides the DMT for the case of multi-branch networks.
	\begin{prop}\label{dmt_mult_prop}
		The DMT of the proposed protocol for the considered multi-branch multi-hop network is given by
		\begin{equation}\label{dmt_mult}
		\delta^{*}(\rho)=(1-\rho)\sum_{z=1}^{Z}\min_{\substack{1\leq m_{z}\leq M_{z} \\ \boldsymbol{\pi}_{m_{z}}\nrightarrow 0}}\mathcal{L}_{m_{z}},\: \rho\in[0,1].
		\end{equation}
	\end{prop}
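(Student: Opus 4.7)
The plan is to reduce this to $Z$ independent applications of Theorem \ref{dmt_thm}, exploiting both the orthogonality of the branches and the selection combining rule. The starting point is the factorization \eqref{out_prob_mult}, $P_{out}(Z,\gamma)=\prod_{z=1}^{Z}P_{out}^{z}(\gamma)$, which holds precisely because SC declares an outage only when all $Z$ branches fail, and the branches operate over disjoint channel resources with independent fading, buffers, and dual-mode decisions.

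First I would fix a multiplexing gain $\rho\in[0,1]$ and a common target rate $R(P)\sim\rho\log P$, which induces a common SNR threshold $\gamma=2^{R(P)}-1$ at the destination. Because the $z$-th branch is itself a single-branch multi-hop network of the form analyzed in Sections \ref{markov} and \ref{analysis}, with its own state MC, stationary distribution $\boldsymbol{\pi}^{(z)}$, and set of disjoint $S$-$D$ paths, Theorem \ref{dmt_thm} applies verbatim and yields
\begin{equation}
\lim_{P\to\infty}-\frac{\log P_{out}^{z}(P)}{\log P}=(1-\rho)\min_{\substack{1\leq m_{z}\leq M_{z}\\ \boldsymbol{\pi}_{m_{z}}\nrightarrow 0}}\mathcal{L}_{m_{z}}.
\end{equation}

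Second, I would take logarithms in \eqref{out_prob_mult}, divide by $-\log P$, and pass to the limit; the product becomes a sum of per-branch diversity exponents, which directly gives \eqref{dmt_mult}. The common prefactor $(1-\rho)$ can be pulled out of the sum since it is shared by every branch.

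The main obstacle is making sure that the per-branch invocation of Theorem \ref{dmt_thm} is legitimate in the multi-branch context. Two points need care: (i) the stationary distributions $\boldsymbol{\pi}^{(z)}$ are determined solely by branch-$z$ parameters $(N,k_{z},H_{z},q)$ and are independent across $z$, so the high-SNR asymptotics of Appendix \ref{proofC} carry through branch-by-branch without interference terms; (ii) if a branch exhibits an outage floor (i.e.\ $\min \mathcal{L}_{m_{z}}=0$ in the sense of Proposition \ref{floor_prop}), its contribution to the exponent sum is $0$, which is consistent with the formula and does not spoil the overall DMT, it merely means that such a branch cannot improve the diversity order. Once these points are checked, the proof reduces to the log-product manipulation above.
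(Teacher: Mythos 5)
Your proposal is correct and follows essentially the same route as the paper: it uses the selection-combining/orthogonality factorization $P_{out}(Z,\gamma)=\prod_{z}P_{out}^{z}(\gamma)$, turns the product into a sum of per-branch exponents via the logarithm, and applies Theorem \ref{dmt_thm} to each branch separately. Your additional remarks on the independence of the per-branch stationary distributions and on branches with an outage floor contributing a zero exponent are consistent with (and slightly more careful than) the paper's argument.
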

	\noindent By considering $P\rightarrow \infty$ and since we assume that the branches are orthogonal, the DMT is derived from \eqref{mult_div_gain} as
	\begin{equation}
		\delta^{*}(\rho)=-\lim_{P\rightarrow \infty}\frac{\log \displaystyle\prod_{z=1}^{Z}P_{out}^{z}(\gamma)}{\log P}=-\sum_{z=1}^{Z}\lim_{P\rightarrow \infty}\frac{\log P_{out}^{z}(\gamma)}{\log P},
	\end{equation}	
	which follows by the logarithmic identity $\log(xy)=\log(x)+\log(y)$. The final expression in \eqref{dmt_mult} is derived by calculating the DMT of each branch separately, following the result of Theorem \ref{dmt_thm}. Based on the above results, the maximum diversity order that can be achieved by a multi-branch network is the sum of the number of hops at each branch i.e.,
	\begin{equation}
		\label{div_gain_mult}
		\delta^{*}(0)=\sum_{z=1}^{Z}k_{z},
	\end{equation}
	which is achieved if each branch topology has only active relays i.e., $\nu_{z}=0$ $\forall$ $z=1,\ldots,Z$. Furthermore, by considering the case with an equal number of hops at all branches, the maximum diversity order can be calculated by $\delta^{*}(0)=Zk$. It is therefore easily observed that the protocol can achieve a performance enhancement of order equal to the number of branches. Even for the case where each branch's performance reaches an outage floor, the utilization of a multi-branch scenario can significantly decrease the overall outage floor value for the considered network.
	
	\subsection{Illustrative example ($N=2$ relays, $k=2$ hops)}
	We provide an example of the proposed framework that refers to a network topology with $N=2$ relays and $k=2$ hops. In this case, the size of the buffers $b_{1}$ and $b_{2}$ (and therefore of the arrays $\beta_{1}$ and $\beta_{2}$) is $L_{1}=2$ and $L_{2}=1$, respectively. The concatenation of the arrays $\beta_{1}$ and $\beta_{2}$ results in a binary vector of finite size $L_{\beta}=3$, which represents a buffer state of the MC. In this example, there are $M_{B}=2^{3}=8$ different buffer states. The set of dual-mode relays for the considered network can be any subset of the ordered set $H=\{1,2\}$. Due to space limitations, the system's outage probability is calculated for the case where all the relays are always active i.e., $H=\emptyset$. Therefore, the resulting MC states are defined only by the buffer states of the network, and are presented in Fig. \ref{fig:states}. By following the procedure in Algorithm \ref{algo}, the state transition matrix $\mathbf{A}$ is derived as
	
	\begin{equation}
		\label{ex_trans_mat}
		\mathbf{A} =
		\begin{pmatrix}
			p_{1,1} & p_{1,2} & p_{1,3} & p_{1,4} & 0 & 0 & 0 & 0 \\
			p_{2,1} & p_{2,2} & p_{2,3} & p_{2,4} & 0 & 0 & 0 & 0 \\
			0 & 0 & 0 & 0 & p_{3,5} & p_{3,6} & p_{3,7} & p_{3,8} \\
			0 & 0 & 0 & 0 & p_{4,5} & p_{4,6} & p_{4,7} & p_{4,8} \\
			p_{5,1} & p_{5,2} & p_{5,3} & p_{5,4} & 0 & 0 & 0 & 0 \\
			p_{6,1} & p_{6,2} & p_{6,3} & p_{6,4} & 0 & 0 & 0 & 0 \\
			0 & 0 & 0 & 0 & p_{7,5} & p_{7,6} & p_{7,7} & p_{7,8} \\
			0 & 0 & 0 & 0 & p_{8,5} & p_{8,6} & p_{8,7} & p_{8,8} \\
		\end{pmatrix},
	\end{equation}
	where the entries $p_{l,m}$ denote the probabilities of the existent transitions and are given by
	\begin{multline}
		\label{ex_prob}
		p_{l,m}=
		\bigg[ \beta_{1,l}[1](1-P_{o}(1,m))+(1-\beta_{1,l}[1])P_{o}(1,m)\bigg]\\
		\times\bigg[ \beta_{2,l}[1](1-P_{o}(2,m))+\left(1-\beta_{2,l}[1]\right)P_{o}(2,m)\bigg].
	\end{multline}
	
	\begin{figure}[t!]\centering
		\includegraphics[width=1\linewidth]{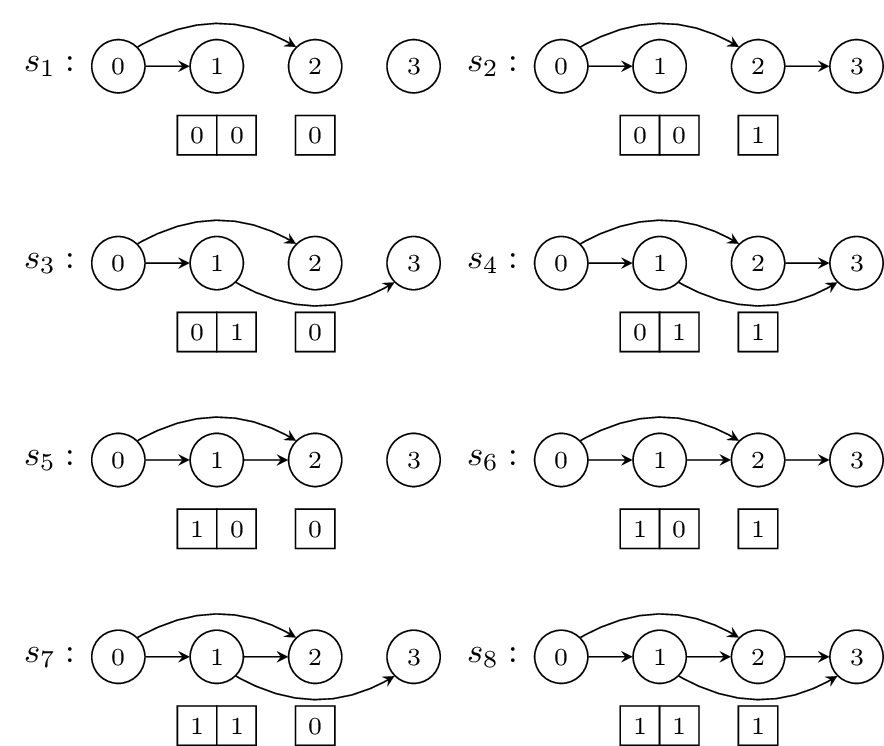}
		\caption{The possible states of the MC for a network topology with $N=2$ relays, $k=2$ hops and $H=\emptyset$. For each state, the contents of the arrays $\beta_{1}$ and $\beta_{2}$ are presented below the corresponding relay, as well as which of the available links are used.}
		\label{fig:states}
	\end{figure}

	By using \eqref{ind_fun} and from the network instances in Fig. \ref{fig:states}, we observe that $S$ always transmits a signal to $R_{1}$, while $R_{2}$ always receives from $S$ and at states $s_{5}$, $s_{6}$, $s_{7}$ and $s_{8}$ it also receives from $R_{1}$. Finally, the transmitters of $D$ are $R_{1}$ and $R_2$ at states $s_4$ and $s_8$, only $R_1$ at states $s_3$ and $s_7$ and only $R_2$ at states $s_2$ and $s_6$. At states $s_1$ and $s_5$ the destination does not receive any data, so it is on outage with probability one. From the derived results the following relations are obtained
	
	\begin{equation}
		\begin{split}
			P_{o}(1,1)&=P_{o}(1,2)=\ldots=P_{o}(1,8),\\
			P_{o}(2,1)&=P_{o}(2,2)=P_{o}(2,3)=P_{o}(2,4),\\
			P_{o}(2,5)&=P_{o}(2,6)=P_{o}(2,7)=P_{o}(2,8),\\
			P_{o}(3,m)&=P_{o}(3,m+4), \: m=1,\ldots,4.
		\end{split}
	\end{equation}
	
	\noindent The system's outage probability is given by calculating the stationary distribution of the MC and, based on the aforementioned relations, is equal to
	
	\begin{align}
		P_{out}(\gamma)&=\sum_{m=1}^{8}\boldsymbol{\pi}_{m}P_{o}(3,m)\nonumber\\
		&=P_{o}(1,1)P_{o}(2,1)+P_{o}(1,1)\left[1-P_{o}(2,1)\right] P_{o}(3,2)\nonumber\\
		&\quad+\left[1-P_{o}(1,1)\right]P_{o}(2,5)P_{o}(3,3)\nonumber\\
		&\quad+\left[1-P_{o}(1,1)\right]\left[1-P_{o}(2,5)\right]P_{o}(3,4),
	\end{align}
	where the probabilities $P_{o}(i,m)$ can be calculated by using the aforementioned outage expressions in \eqref{out_prob_node}, \eqref{cdf_exp} or \eqref{out_prob_saa}.
	
	Regarding the diversity gain of the network, we need to examine which are the dual-mode relays within the network. Thus, if all the relays are active i.e., $H=\emptyset$, then the proposed protocol can achieve diversity order equal to the number of hops, while for $H=\{1\}$ or $H=\{2\}$, from Corollary \ref{cor_div} the maximum diversity order is equal to one. Finally, if all the relays are dual-mode i.e., $H=\{1,2\}$, then at high $\SNR$ the outage probability converges to a floor value which is calculated from \eqref{floor_b} as $e(2,2)=q^{2}$.

\begin{table}[t!]\centering
	\caption{Summary of simulation parameters}\label{Table1}
	\scalebox{0.9}{
		\begin{tabular}{| l | l |}\hline
			\textbf{Description} & \textbf{Value}\\\hline
			Number of relays & $N=2$ or $3$\\\hline
			Distance between $S$ and $D$ & $d_{0,N+1}=3$ m\\\hline
			Path loss exponent & $\eta=2$\\\hline
			Variance of AWGN & $\sigma^{2}=1$\\\hline 
			$\SNR$ threshold & $\gamma=0$ dB\\\hline
			Probability of silent mode & $q=0.1$ or $0.5$\\\hline
	\end{tabular}}\vspace{-5mm}
\end{table}

	\section{Numerical Results} \label{results}
	In this section, we validate our theoretical analysis and main analytical results with computer simulations to demonstrate the effect of various network parameters on the performance of our proposed myopic protocol. For the simulations, the following parameters are considered. The distance between $S$ and $D$ is set to a fixed value of $d_{0,N+1}=3$ m and the distance between two consecutive nodes is the same for all nodes i.e., $d_{i,i+1}=d=d_{0,N+1}/(N+1)$ $\forall$ $i=0,\ldots,N$. In addition, the path loss exponent is equal to $\eta=2$, the variance of the AWGN is normalized to $\sigma^{2}=1$ and the energy of the channel coefficients is normalized to $\mathbb{E}\left[\left|h_{i,j}\right|^2\right]=1$. Finally, the $\SNR$ threshold is set to $\gamma=0$ dB. The values of the simulation parameters are summarized in Table \ref{Table1}. Note that, in all the presented figures, the analytical results are illustrated with lines (solid, dashed or dotted) and the simulation results with markers.
	
	It can be easily observed that the system's outage probability depends on how the transmit power of each node is divided. Therefore, our results are numerically optimized with respect to the power splitting parameters $a_{i,j}$. We formulate the system's outage probability minimization problem as follows\vspace{-1mm}
	\begin{mini}
		{{\scriptstyle\{a_{i,j}\}}}{P_{out}(\gamma)}{}{}
		\label{optim}
		\addConstraint{\sum_{j=i+1}^{L_{i}+i} a_{i,j}}{=1,}{i=0,\ldots,N}
		\addConstraint{0 < a_{i,j}}{< 1,\quad}{i=0,\ldots,N}
		\addConstraint{q=0.\:\:\:}{ }{ }
	\end{mini}
	Note that the second constraint ensures that a portion of the power $P$ will be allocated to all the available channel links. Moreover, the optimization is formulated for the case where all the relays of the network are always active i.e., $q=0$, which leads to a simpler implementation of the protocol, since the power management at each relay will not depend on the general knowledge of the communication modes of all the relays at each time-slot. Due to the complexity of the derived expressions, the aforementioned minimization problem can be solved using numerical tools, such as the \textit{NMinimize} function of Mathematica \cite{Mathematica}.
	
	\begin{figure}[t!]\centering
		\includegraphics[width=0.9\linewidth]{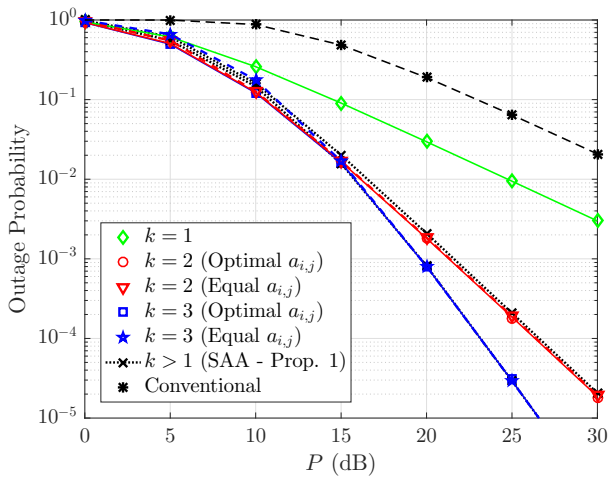}\vspace{-3mm}
		\caption{Outage probability versus $P$ for a network topology with $N=2$ relays, $k=1,2,3$ hops, $H=\emptyset$, $\gamma=0$ dB and $d=1$ m; the theoretical results are depicted with lines and the simulation results with markers.}
		\label{fig:out_prob_N_2_div}
	\end{figure}
	
	Figs. \ref{fig:out_prob_N_2_div} and \ref{fig:out_prob_N_3_div} illustrate the system's outage probability against the transmit power $P$ in a network setting with $N=2$ and $3$ relays, respectively, considering the case where the relays are always active. The outage performance is investigated for all possible $k$-hop myopic scenarios i.e., for $k=1,\dots,N+1$. For these scenarios, the conventional multi-hop DF scheme used in \cite{hasna2003}, where each node sends a signal only to its subsequent node through orthogonal channels, is used as a performance benchmark. The outage probability in this case is given by
	
	\begin{align}
	P_{out,c}&=1-\prod_{j=1}^{N+1}\mathbb{P}\left[\SNR_{j}\geq \gamma_{c}\right]\nonumber\\
	&=1-\exp\left[-\dfrac{\left(N+1\right)\gamma_{c}d^{\eta}\sigma^{2}}{P}\right],
	\end{align}

	\noindent where $\gamma_{c}=(\gamma+1)^{N+1}-1$. We observe that, in both figures, our proposed protocol is superior to the conventional multi-hop scheme. In particular, it can be seen that the $k$-hop myopic strategy outperforms the typical multi-hop DF scheme in terms of outage performance and coding gain. Moreover, for $k>1$ the myopic scheme also achieves a higher diversity gain. Furthermore, in Fig. \ref{fig:out_prob_N_2_div}, the optimized outage performance of the network, according to \eqref{optim}, is also compared to the case where the power at each transmitter is equally divided to each channel link i.e., $a_{i,j}=1/\min(k,N-i+1)$, $0\leq i \leq N$. It is observed that, while at high $\SNR$s the performance is almost the same for both cases, in the low $\SNR$ regime the outage probability is slightly improved for the numerically optimized values of $a_{i,j}$. It is also worth noting that, when $P$ is equally divided, the system cannot fully benefit from the increase on the number of hops, since the outage performance for $k=3$ is worse than the $2$-hop scenario at low transmit power values.
	
	\begin{figure}[t!]\centering
		\includegraphics[width=0.9\linewidth]{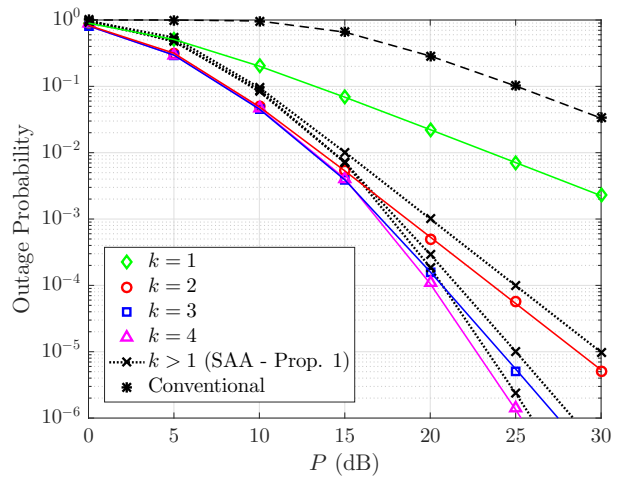}\vspace{-3mm}
		\caption{Outage probability versus $P$ for a network topology with $N=3$ relays, $k=1,2,3,4$ hops, $H=\emptyset$, $\gamma=0$ dB and $d=0.75$ m; the theoretical results are depicted with lines and the simulation results with markers.}
		\label{fig:out_prob_N_3_div}
	\end{figure}
	
	Fig. \ref{fig:out_prob_N_2_div} shows that an increase in the number of hops results in an improvement of the outage probability performance, with the cases of $k=1$ and $k=2$ revealing the most significant difference (i.e., about $8$ dB gain for an outage probability equal to $10^{-2}$). However, for a transmit power value up to $15$ dB the $2$-hop scenario has similar performance with the $3$-hop case. Therefore, for lower values of $P$ a topology with $k<N+1$ can have a performance close to the case of full cooperation; this observation corresponds to the results of \cite{ong2008}. In addition, it can be seen that as the number of hops increases the diversity gain is also improved, which complies with our analysis indicating a maximum diversity order equal to $k$ when the relays of the network are always active. Similar results are derived in Fig. \ref{fig:out_prob_N_3_div} for $N=3$ relays. It is worth noting that in this case the outage probability for our protocol is slightly improved, compared to the results in Fig. \ref{fig:out_prob_N_2_div}, due to the reduced path loss. On the other hand, the diversity gain remains the same, as it depends only on the number of hops. Finally, we observe that in both figures the theoretical values (lines) perfectly match to the simulation results (markers), which validates the accuracy of our analysis, while for $k>1$ the obtained SAA expression in Proposition \ref{saa_prop} provides a tight approximation of the actual performance of the network.
	
	\begin{figure}[t!]\centering
		\includegraphics[width=0.9\linewidth]{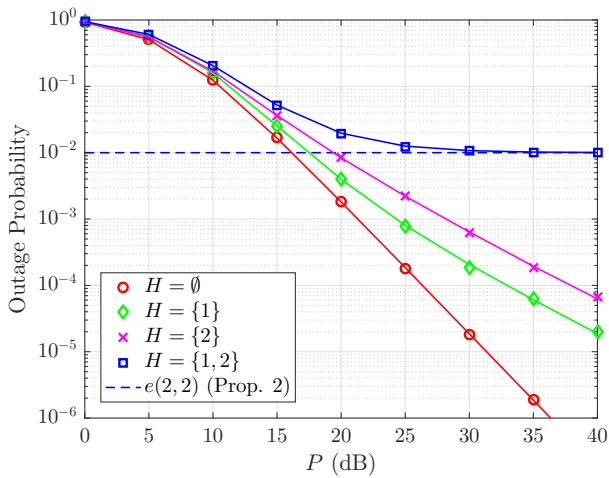}\vspace{-3mm}
		\caption{Outage probability versus $P$ for different sets of dual-mode relays; $N=2$ relays, $k=2$ hops, $q=0.1$, $\gamma=0$ dB and $d=1$ m; the theoretical results are depicted with lines and the simulation results with markers.}
		\label{fig:out_prob_N_2_H}
	\end{figure}
	
	\begin{figure}[t!]\centering
		\includegraphics[width=0.9\linewidth]{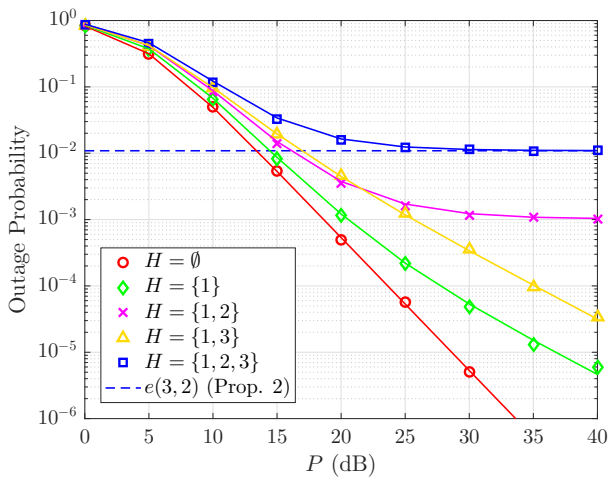}\vspace{-3mm}
		\caption{Outage probability versus $P$ for different sets of dual-mode relays; $N=3$ relays, $k=2$ hops, $q=0.1$, $\gamma=0$ dB and $d=0.75$ m.}
		\label{fig:out_prob_N_3_H}
	\end{figure}

	In Figs. \ref{fig:out_prob_N_2_H} and \ref{fig:out_prob_N_3_H} the achieved outage probability is presented for various sets of dual-mode relays, under the 2-hop myopic protocol and for a network topology with $N=2$ and $3$ relays, respectively. As expected, the case where all the relays are active is superior to any other scenario and is the only case that can achieve the maximum possible diversity gain $k=2$. On the other hand, if the network consists of only dual-mode relays, the outage performance is significantly deteriorated and as $P$ increases, the outage probability converges to the floor value calculated in Proposition 2. The performance of any other deployment scenario lies between the limits set by the previous two extreme cases.
	
	Specifically, in Fig. \ref{fig:out_prob_N_2_H} it is observed that both intermediate cases with $H=\{1\}$ and $H=\{2\}$ achieve the same diversity order, which is equal to one. However, when only $R_1$ is able to switch between active and silent mode, the system achieves a better outage performance than the system where only $R_2$ has dual-mode operations. This is expected, since $R_2$ is closer to $D$, which is also its only receiver, and so $R_2$ being silent has a higher impact to the system's outage probability. Again, in this figure we show that our simulation results (markers) validate our analysis (lines). In Fig. \ref{fig:out_prob_N_3_H}, it can be seen that if only $R_{1}$ is dual-mode, the achieved diversity order is decreased to one. Moreover, in the case of $H=\{1,3\}$ the same diversity order can be achieved, even though more dual-mode relays are used, since there is still a possible path from $S$ to $D$ which does not depend on the dual-mode operations ($S\rightarrow R_2 \rightarrow D$). On the contrary, the performance of the topology considering $H=\{1,2\}$ converges to an outage floor, due to the consecutive order of the two dual-mode relays. Therefore, the selection of which relays will operate in dual-mode is critical for the network's performance, and especially the achieved diversity gain, and this observation follows our discussion in Section \ref{dmt_sect}.

	Finally, Fig. \ref{fig:out_prob_Z_q} depicts the outage probability versus $P$ for different number of branches and various combinations of dual-mode relays, where each branch deploys a network setting with $N=2$ relays and $k=2$ hops. First of all, we can see that an increase in the number of branches results in the improvement of the system's outage probability. Specifically, compared to the single branch with only active relays scenario, doubling the branches will also double the achieved diversity gain, while if the second branch has only dual-mode relays, the system will not achieve higher diversity gain but its outage performance will still be enhanced. Similarly, if we consider a single branch network with only dual-mode relays, doubling the branches will significantly decrease the outage floor value. The aforementioned scenarios are also presented for different values of $q$ i.e., the probability of a dual-mode relay to be silent. As expected, as this probability increases the system's outage probability increases as well, since the dual-mode relays will remain silent for a larger number of time-slots. However, the system's performance remains unaffected in terms of diversity gain.
	
	\begin{figure}[t!]\centering
		\includegraphics[width=0.9\linewidth]{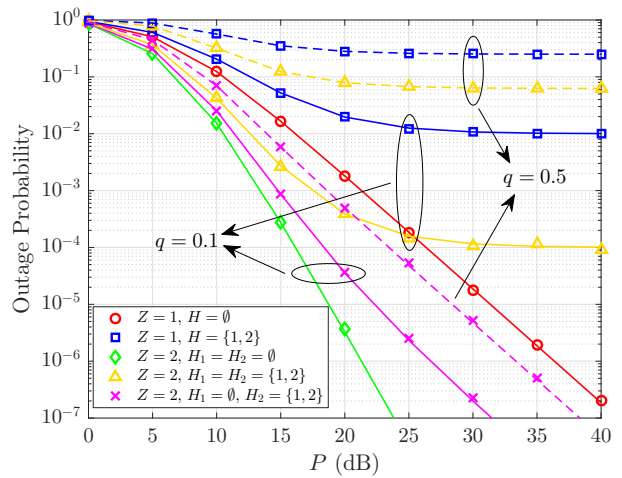}\vspace{-3mm}
		\caption{Outage probability versus $P$ for different number of branches; $N=2$ relays, $k=2$ hops, $\gamma=0$ dB and $d=1$ m.}
		\label{fig:out_prob_Z_q}
	\end{figure}

	\section{Conclusions}\label{conc}
	In this paper, we presented a new protocol over a multi-hop cooperative network, based on the myopic strategy, where the relays have buffers of finite size and can operate in two communication modes, namely the active and silent mode. A general methodology that captures how the contents of each relay’s buffer and the communication operation of each relay evolve with time was proposed by using a MC formulation. Under this framework, we derived the outage probability of the system and a general expression for the resulting DMT, and we investigated the maximum diversity order that can be achieved, based on different topology scenarios. The proposed protocol was also generalized for multi-branch networks. We demonstrated that as the number of hops in the proposed protocol increases, the system’s performance is enhanced in terms of outage probability, while the achieved diversity gain depends on both the number of hops and the group of relays that have dual-mode operations.
	\appendix
	\subsection{Proof of Lemma \ref{steady}} \label{proofA}
	We first need to verify that some properties of the state transition matrix $\mathbf{A}$ hold, in order to ensure that it has a unique stationary distribution. Specifically, $\mathbf{A}$ must be a column stochastic matrix, which is irreducible and aperiodic. A column stochastic matrix is a square matrix of non-negative terms in which the elements in each column sum up to one, while a non-negative matrix is called irreducible if every pair of states can communicate. Finally, the period of a state $s_{m}$ is the greatest common divisor of the set $\{t\in \mathbb{N}:p_{m,m}^{(t)}>0\}$ and if the period is $1$, the state is aperiodic. The transition matrix is aperiodic if all the states are aperiodic \cite{gallager2013}.

	For any MC under the proposed framework, the transitions from state $s_{m}$ to all the possible states $s_{l}$ have probabilities that sum up to one i.e., $\sum_{l=1}^{M}p_{l,m}=1$. Therefore, the transition matrix is column stochastic. Moreover, from the structure of the problem, it is observed that there is a path from any state to any other state of the MC. Consequently, all the states belong to a single communication class and the transition matrix is irreducible. An irreducible MC needs only one aperiodic state to imply that all states are aperiodic. In the defined MC the transition probability from state $s_{1}$ to the same state is always non-zero i.e., $p_{1,1}=\mathbb{P}(s_{1}\rightarrow s_{1})>0$; hence the state $s_{1}$ is aperiodic. Since there is only one communication class, all states are aperiodic and therefore, the transition matrix is aperiodic. As the required properties of the transition matrix hold, we conclude that it has a unique stationary distribution which is given as in \cite{krikidis2012}.
	\subsection{Proof of Theorem \ref{out_prob_thm}} \label{proofB}
	In the proposed protocol, at each time-slot, the $\SNR$ at the $j$-th receiver is calculated based on which nodes transmit a signal to the receiver, given the network state $s_{m}$. According to the presented theoretical framework, the $i$-th node transmit a signal to the $j$-th receiver, if \eqref{ind_fun} is equal to one i.e., the node is active and its specific buffer's element is not empty. Thus, the $\SNR$ of the $j$-th node for the transition $s_{m}\rightarrow s_{l}$ equals
	\begin{equation}
		\label{snr_proof}
		\SNR_{j,m}=\dfrac{P}{\sigma^{2}}\left(\sum_{i=[j-k]^{+}}^{j-1} |h_{i,j}|\sqrt{\dfrac{a_{i,j}}{d_{i,j}^{\eta}}}\mathds{1}_{(i\rightarrow j),m}\right)^2.
	\end{equation}
	The probability of having an outage event at the $j$-th node can be expressed as
	\begin{equation}
		\label{out_prob_proof}
		P_{o}(j,m)=\mathbb{P}\left[\sum_{i=[j-k]^{+}}^{j-1} |h_{i,j}|\sqrt{\dfrac{a_{i,j}}{d_{i,j}^{\eta}}}\mathds{1}_{(i\rightarrow j),m}<\sqrt{\dfrac{\gamma \sigma^{2}}{P}}\right].
	\end{equation}
	To simplify the analysis, the characteristic function approach is used for the derivation of the outage probability. Each term of the above sum follows a Rayleigh distribution, hence its characteristic function is \cite{papoulis2002}
	\begin{equation}
		\label{char_fun_proof}
		\phi_{i,j}(t)=1+\jmath t\sqrt{\dfrac{2\pi a_{i,j}}{d_{i,j}^{\eta}}}\exp\bigg(-\dfrac{a_{i,j}t^2}{2d_{i,j}^{\eta}}\bigg)\Phi\bigg(\jmath t\sqrt{\dfrac{a_{i,j}}{d_{i,j}^{\eta}}}\bigg).
	\end{equation}
	Since the sum is a linear combination of independent random variables, its characteristic function is the product of each individual's characteristic function. Using the Gil-Pelaez inversion theorem \cite{pelaez1951}, we can obtain $ P_{o}(j,m) $ as \eqref{out_prob_node}.
	\subsection{Proof of Theorem \ref{dmt_thm}} \label{proofC}
	By considering $P\rightarrow \infty$, we notice that the outage probability at the $j$-th relay converges to zero if it receives a signal from at least one transmitter i.e., $C_{j,m}>0$, otherwise it is equal to one. Thus, the relays that receive a signal are almost surely able to decode the message. Moreover, if a transmitter has a full buffer at every time-slot and is always active, then its corresponding receivers will also have constantly a full buffer. For example, a relay that receives information from $S$ will always have a full buffer, and if this relay is not in dual-mode operation, then its corresponding receivers will also have full buffers at every time-slot. This implies that the steady states of the transition matrix that do not conform with the above observations will converge to zero.

	According to \eqref{out_prob_sys}, the system's outage probability is a sum of $M$ terms that depends on the stationary distribution of the MC and the outage probability at the destination for each steady state. At the high $\SNR$ regime, the terms of the sum in \eqref{out_prob_sys} that correspond to steady states converging to zero are omitted. For each of the remaining terms, we obtain the network instance associated with the corresponding MC state $s_m$ and derive the flow graph $G_m$ from $S$ to $D$ for the paths that will surely forward the signals to $D$ i.e., by omitting the dual-mode relays. Let $\mathcal{T}_m=\{T_{1,m}, T_{2,m}, \ldots\}$ denote the complete set of cuts separating $S$ and $D$ at $G_m$, and $\mathcal{L}_{i,m}$ denote the number of links crossing the cut $T_{i,m}$. According to the information theoretic max-flow min-cut theorem \cite{cover1991}, the DMT that can be achieved by $G_m$ is upper bounded by
	\begin{equation}
		\delta_m(\rho)\leq\mathcal{L}_m=\min (\mathcal{L}_{i,m}).
	\end{equation}
	Note that $\mathcal{L}_m$ is equal to the minimum number of edge-disjoint paths that forward information from $S$ to $D$. The system's outage probability is dominated by the terms that have the lowest order, or equivalently by the minimum number of links crossing a cut $T_{i,m}$ from all the derived flow graphs $G_m$, $1\leq m \leq M$, $\boldsymbol{\pi}_{m}\nrightarrow 0$. Thus, by defining the target data rate as $R=\log{(1+\gamma)}=\rho\log{P}$ and based on the results of \cite{sreeram2012} about the DMT of networks that can organize their relays into parallel paths, the DMT is given by \eqref{dmt_gen}.

	\subsection{Proof of Proposition \ref{floor_prop}}\label{proofD}
	Recall that, by considering $P\rightarrow \infty$, the outage probability of a node converges to zero if it receives at least one signal during one time-slot. Thus, a receiver is in outage only if all its corresponding transmitters remain silent. For an arbitrary signal $x(t)$ the system is in outage if the signal cannot reach $D$, so the outage floor value is given by the aggregate probability of the instances that fail to transfer the signal to $D$. According to the proposed protocol, until time-slot $(t+k-1)$ the first $k$ relays will definitely receive and decode the signal $x(t)$ from $S$, which is always active. If $N=k$, then the next node is $D$ and an outage occurs if all the relays remain silent, for which the probability is calculated as \eqref{floor_b}.
	
	For $N>k$, relay $k+1$ cannot receive $x(t)$ if the previous $k$ relays remain silent during time-slot $(t+k)$, which occurs with probability $q^{k}$. For the remaining time-slots, we can calculate the probability of the system being in outage by equivalently considering the outage probability of another network topology with $N-2$ relays and $k-1$ hops, since the first relay cannot contribute anymore to the transmission of $x(t)$ and relay $k+1$ has not decoded the signal, and so they can be omitted. On the contrary, relay $k+1$ receives $x(t)$ at time-slot $(t+k)$ if at least one of its transmitters is active, and this occurs with probability $(1-q^{k})$. In this case, the outage probability for the remaining time-slots is matched to the outage probability of an equivalent topology with $N-1$ relays and $k$ hops, by considering the first relay as the new source node. This recursive behavior leads to the calculation of the outage floor by \eqref{floor_a}.


\begin{thebibliography}{10}
	\bibitem{nicolaides2020} A. Nicolaides, C. Psomas, and I. Krikidis, ``Outage analysis of myopic multi-hop relaying: A Markov chain approach,'' in \emph{Proc. IEEE Global
	Commun. Conf.,} Taipei, Taiwan, Dec. 2020, pp. 1--6.

	\bibitem{saad2020} W. Saad, M. Bennis, and M. Chen, ``A vision of 6G wireless systems: Applications, trends, technologies, and open research problems,'' \emph{IEEE Netw.,} vol. 34, no. 3, pp. 134--142, May 2020.
	
	\bibitem{tehrani2014} M. N. Tehrani, M. Uysal, and H. Yanikomeroglu, ``Device-to-device communication in 5G cellular networks: challenges, solutions, and future directions,'' \emph{IEEE Commun. Mag.,} vol. 52, no. 5, pp. 86--92, May 2014.
	
	\bibitem{chen2021} X. Chen, D. W. K. Ng, W. Yu, E. G. Larsson, N. Al-Dhahir, and R. Schober, ``Massive access for 5G and beyond,'' \emph{IEEE J. Sel. Areas Commun.,} vol. 39, no. 3, pp. 615--637, Mar. 2021.
	
	\bibitem{atallah2016} R. Atallah, M. Khabbaz, and C. Assi, ``Multihop V2I communications: a feasibility study, modeling, and performance analysis,'' \emph{IEEE Trans. Veh. Technol.,} vol. 66, no. 3, pp. 2801--2810, Mar. 2017.
	
	\bibitem{rois2015} J. G. Rois, F. G. Cuba, R. M. Akdeniz, F. J. G. Castaño, J. C. Burguillo, S. Rangan, and B. Lorenzo, ``On the analysis of scheduling in dynamic duplex multihop mmWave cellular systems,'' \emph{IEEE Trans. Wireless Commun.,} vol. 14, no. 11, pp. 6028--6042, Nov. 2015.
	
	\bibitem{saeed2019} N. Saeed, A. Celik, M. S. Alouini, and T. Y. Al-Naffouri, ``Performance analysis of connectivity and localization in multi-hop underwater optical wireless sensor networks,'' \emph{IEEE Trans. Mobile Comput.,} vol. 18, no. 11, pp. 2604--2615, Nov. 2019.
	
	\bibitem{chen2018} Y. Chen, N. Zhao, Z. Ding, and M. S. Alouini, ``Multiple UAVs as relays: multi-hop single link versus multiple dual-hop links," \emph{IEEE Trans. Wireless Commun.,} vol. 17, no. 9, pp. 6348--6359, Sep. 2018.
	
	\bibitem{wang2020} H. Wang, Y. Zhang, X. Zhang and Z. Li, ``Secrecy and covert communications against UAV surveillance via multi-hop networks,'' \emph{IEEE Trans. Commun.,} vol. 68, no. 1, pp. 389--401, Jan. 2020.
	
	\bibitem{hasna2004} M. O. Hasna and M. S. Alouini, ``A performance study of dual-hop transmissions with fixed gain relays,'' \emph{IEEE Trans. Wireless Commun.,} vol. 3, no. 6, pp. 1963--1968, Nov. 2004.
	
	\bibitem{bjornson2013} E. Bjornson, M. Matthaiou, and M. Debbah, ``A new look at dual-hop relaying: Performance limits with hardware impairments,'' \emph{IEEE Trans. Commun.,} vol. 61, no. 11, pp. 4512--4525, Nov. 2013.
	
	\bibitem{hasna2003} M. O. Hasna and M. S. Alouini, ``Outage probability of multihop transmission over Nakagami fading channels,'' \emph{IEEE Commun. Lett.,} vol. 7, no. 5, pp. 216--218, May 2003.
	
	\bibitem{jamali2015} V. Jamali, N. Zlatanov, H. Shoukry, and R. Schober, ``Achievable rate of the half-duplex multi-hop buffer-aided relay channel with block fading,'' \emph{IEEE Trans. Wireless Commun.,} vol. 14, no. 11, pp. 6240--6256, Nov. 2015.
	
	\bibitem{lin2015} X. Lin and J. G. Andrews, ``Connectivity of millimeter wave networks with multi-hop relaying,'' \emph{IEEE Wireless Commun. Lett.}, vol. 4, no. 2, pp. 209--212, Apr. 2015.
	
	\bibitem{laneman2004} J. N. Laneman, D. N. C. Tse, and G. W. Wornell, ``Cooperative diversity in wireless networks: Efficient protocols and outage behavior,'' \emph{IEEE Trans. Inf. Theory,} vol. 50, no. 12, pp. 3062--3080, Dec. 2004.
	
	\bibitem{ribeiro2005} A. Ribeiro, X. Cai, and G. B. Giannakis, ``Symbol error probabilities for general cooperative links,'' \emph{IEEE Trans. Wireless Commun.,} vol. 4, no. 3, pp. 1264--1273, May 2005.
	
	\bibitem{boyer2004} J. Boyer, D. D. Falconer, and H. Yanikomeroglu, ``Multihop diversity in wireless relaying channels,'' \emph{IEEE Trans. Commun.,} vol. 52, no. 10, pp. 1820--1830, Oct. 2004.
	
	\bibitem{sadek2007} A. K. Sadek, W. Su, and K. J. R. Liu, ``Multinode cooperative communications in wireless networks,'' \emph{IEEE Trans. Signal Process.,} vol. 55, no. 1, pp. 341--355, Jan. 2007.
	
	\bibitem{dong2012} C. Dong, L. L. Yang, and L. Hanzo, ``Performance analysis of multihop-diversity-aided multihop links,'' \emph{IEEE Trans. Veh. Tech.,} vol. 61, no. 6, pp. 2504--2516, Jul. 2012.
	
	\bibitem{sreeram2012} K. Sreeram, S. Birenjith, and P. V. Kumar, ``DMT of multihop networks: End points and computational tools,'' \emph{IEEE Trans. Inf. Theory,} vol. 58, no. 2, pp. 804--819, Feb. 2012.
	
	\bibitem{ong2008} L. Ong and M. Motani, ``Myopic coding in multiterminal networks,'' \emph{IEEE Trans. Inf. Theory,} vol. 54, no. 7, pp. 3295--3314, Jul. 2008.
	
	\bibitem{niyato2007} D. Niyato, E. Hossain, and A. Fallahi, ``Sleep and wakeup strategies in solar-powered wireless sensor/mesh networks: Performance analysis and optimization,'' \emph{IEEE Trans. Mobile Comput.,} vol. 6, no. 2, pp. 221--236, Feb. 2007.
	
	\bibitem{medepally2010} B. Medepally and N. B. Mehta, ``Voluntary energy harvesting relays and selection in cooperative wireless networks,'' \emph{IEEE Trans. Wireless Commun.,} vol. 9, no. 11, pp. 3543--3553, Nov. 2010.
	
	\bibitem{luo2013} Y. Luo, J. Zhang, and K. B. Letaief, ``Relay selection for energy harvesting cooperative communication systems,'' in \emph{Proc. IEEE Global Commun. Conf.,} Atlanta, USA, Dec. 2013, pp. 2514--2519.
	
	\bibitem{morsi2018} R. Morsi, D. S. Michalopoulos, and R. Schober, ``Performance analysis of near-optimal energy buffer aided wireless powered communication,'' \emph{IEEE Trans. Wireless Commun.,} vol. 17, no. 2, pp. 863--881, Feb. 2018.
	
	\bibitem{wang2015} S. Wang, W. Guo, Z. Zhou, Y. Wu, and X. Chu, ``Outage probability for multi-hop D2D communications with shortest path routing," \emph{IEEE Commun. Lett.,} vol. 19, no. 11, pp. 1997--2000, Nov. 2015.
	
	\bibitem{korpi2017} D. Korpi, M. Heino, C. Icheln, K. Haneda, and M. Valkama, ``Compact inband full-duplex relays with beyond 100 dB self-interference suppression: enabling techniques and field measurements,'' \emph{IEEE Trans. Antennas Propag.,} vol. 65, no. 2, pp. 960--965, Feb. 2017.
	
	\bibitem{li2016} T. Li, P. Fan, and K. B. Letaief, ``Outage probability of energy harvesting relay-aided cooperative networks over rayleigh fading channel,'' \emph{IEEE Trans. Veh. Technol.,} vol. 65, no. 2, pp. 972--978, Feb. 2016.
	
	\bibitem{hu2005} J. Hu and N. C. Beaulieu, ``Accurate simple closed-form approximations to Rayleigh sum distributions and densities,'' \emph{IEEE Commun. Lett.,} vol. 9, no. 2, pp. 109--111, Feb. 2005.
	
	\bibitem{hitczenko1998} P. Hitczenko, ``A note on a distribution of weighted sums of i.i.d. Rayleigh random variables,'' \emph{Sankhyā: The Indian Journal of Statistics, Series A (1961-2002),} vol. 60, no. 2, pp. 171--175, Jun. 1998.
	
	\bibitem{zheng2003} L. Zheng and D. N. C. Tse, ``Diversity and multiplexing: a fundamental tradeoff in multiple-antenna channels,'' \emph{IEEE Trans. Inf. Theory,} vol. 49, no. 5, pp. 1073--1096, May 2003.
	
	\bibitem{goldsmith2005} A. Goldsmith, \emph{Wireless Communications.} Cambridge: Cambridge University Press, 2005.
	
	\bibitem{Mathematica} Wolfram Mathematica Documentation - NMinimize. [Online]. Available: https://reference.wolfram.com/language/ref/NMinimize.html
	
	\bibitem{gallager2013} R. G. Gallager, \emph{Stochastic Processes: Theory for Applications.} Cambridge: Cambridge University Press, 2013.
	
	\bibitem{krikidis2012} I. Krikidis, T. Charalambous, and J. S. Thompson, ``Buffer-aided relay selection for cooperative diversity systems without delay constraints,'' \emph{IEEE Trans. Wireless Commun.,} vol. 11, no. 5, pp. 1957--1967, May 2012.
	
	\bibitem{papoulis2002} A. Papoulis and S. U. Pillai, \emph{Probability, Random Variables, and Stochastic Processes,} 4th ed. McGraw Hill, 2002.
	
	\bibitem{pelaez1951} J. Gil-Pelaez, ``Note on the inversion theorem,'' \emph{Biometrika,} vol. 38, no. 3-4, pp. 481--482, Dec. 1951.
	
	\bibitem{cover1991} T. M. Cover and J. Thomas, \emph{Elements of Information Theory.} New York: Wiley, 1991.
\end{thebibliography}
\end{document}